\def\submission{0}
\DeclarePairedDelimiter\abs{\lvert}{\rvert}
\theoremstyle{plain}
\newtheorem{thm}{Theorem}[section]
\newtheorem{lma}[thm]{Lemma}
\theoremstyle{definition}
\newtheorem{defn}[thm]{Definition} 
\newtheorem{cla}[thm]{Claim}
\newtheorem{cor}[thm]{Corollary}
\newtheorem{remark}[thm]{Remark} 
\newcommand{\Mod}[1]{\ \mathrm{mod}\ #1}
\newcommand{\polylog}{{\mathrm{polylog}}}
\title{On the Hardness of Massively Parallel Computation}
\author{Kai-Min Chung \\ Academia Sinica \\ kmchung@iis.sinica.edu.tw
    \and Kuan-Yi Ho \\ University of Texas at Austin \\ kyho@cs.utexas.edu
    \and Xiaorui Sun \\ University of Illinois at Chicago \\ xiaorui@uic.edu}
\author{}
\date{}
\begin{document}
\maketitle

\begin{abstract}
We investigate whether there are inherent limits of parallelization in the (randomized) massively parallel computation (MPC) model by comparing it with the (sequential) RAM model. As our main result, we show the existence of hard functions that are essentially not parallelizable in the MPC model. Based on the widely-used random oracle methodology in cryptography with a cryptographic hash function $h:\{0,1\}^n \rightarrow \{0,1\}^n$ computable in time $t_h$, we show that there exists a function that can be computed in time $O(T\cdot t_h)$ and space $S$ by a RAM algorithm, but any MPC algorithm with local memory size $s < S/c$ for some $c>1$ requires at least $\tilde{\Omega}(T)$\footnote{Throughout the paper, we use the convention $\tilde{\Omega}(T)=\Omega(T/\polylog (T))$ and $\tilde{O}(T)=O(T\cdot\polylog (T))$} rounds to compute the function, even in the average case, for a wide range of parameters $n \leq S \leq T \leq 2^{n^{1/4}}$. Our result is almost optimal in the sense that by taking $T$ to be much larger than $t_h$, \textit{e.g.}, $T$ to be sub-exponential in $t_h$, to compute the function, the round complexity of any MPC algorithm with small local memory size is asymptotically the same (up to a polylogarithmic factor) as the time complexity of the RAM algorithm.
Our result is obtained by adapting the so-called compression argument from the data structure lower bounds and cryptography literature to the context of massively parallel computation.

\end{abstract}

\thispagestyle{empty}
\newpage
\pagenumbering{arabic}
\def\Line{\mathbf{Line}}
\def\SimLine{\mathbf{SimLine}}
\def\Dec{\mathbf{Dec}}
\def\Enc{\mathbf{Enc}}

\ifnum\submission=1
\begin{multicols}{2}
\fi

\section{Introduction}
In the last decade, there has been significant development of parallel computation. Modern parallel computation frameworks such as MapReduce, Hadoop, and Spark are designed to manipulate large-scale data sets and share a number of common properties. Modeling and analyzing these frameworks algorithmically help us confirm the practical success theoretically and the new ideas created in the process may also be adapted to enhance the performance of these practical frameworks. To this end, much effort has been made to model the essential properties behind these frameworks and explore the power of the developed model. 

The first theoretic model capturing the modern parallel computation frameworks has been proposed by Karloff, Suri, and Vassilvitskii~\cite{KSV10}. Ever since then, the theoretical study of these frameworks started to increase rapidly and several refinements of the model along with new algorithms have been proposed~\cite{ANOY14, CLMMOS18,IMS17, KSV10,KMVV15, LMSV11,MZ15,RVW16}. 

The Massively Parallel Computation (MPC) model consists of $m$ machines and each of them has local memory of size $s$. The input is partitioned arbitrarily across all the machines and the computation proceeds in synchronized rounds. In each round, each machine is able to do any computation on its own memory. After the computation is done, each machine computes a set of messages. Then, the underlying system will direct the messages to the corresponding machine. As, in practice, most costs come from the network communication, the goal is to minimize the round complexity. Also, to rule out the trivial algorithm, common constraints require that given input of size $N$, $ms=\Theta(N)$ and $N^\epsilon\leq m\leq N^{1-\epsilon}$ for some constant $\epsilon>0$.


Many computational problems, such as 
 graph problems~\cite{AG18,ANOY14,
AndoniSSWZ18,
andoni2019log,
Assadi17,
ABBMS17,AK17, arXiv,BKV12, 
BehnezhadDTK18,
behnezhad2019exponentially,
BrandtFU18,
ChangFGUZ19,
CLMMOS18,
GhaffariGMR18,
GhaffariU19,
gamlath2018weighted,
LackiMW18,LMSV11,
Onak18,RastogiMCS13},
clustering~\cite{BMVKV12,BBLM14,EIM11, ghaffari2019improved,YV17} and 
submodular function optimization~\cite{PENW16,EN15, KMVV15, MKSK13},
have been studied in this model, with an emphasis on developing algorithms that minimize the number of communication rounds. 
For example, recently,~\cite{IMS17} showed that massively parallel computation can simulate dynamic programming algorithms admitting two properties, i.e.  monotonicity and decomposability. This shows the power of massively parallel computation since the process of dynamic programming typically requires large memory space and  is considered inherently sequential. 
\paragraph{Limitation of Massively Parallel Computation.} In this work, we investigate whether there are inherent limits in the massively parallel computation model. Namely, whether there are functions that are hard to parallelize in the MPC model. Towards this, we compare it with the (sequential) RAM model of computation. Suppose we have a function that can be computed by a RAM algorithm with time complexity $T$ and space complexity $S$ (assume the input size $N\leq S$). It is easy to see that an MPC algorithm can compute the function in $T$ rounds by emulating the RAM computation step by step, even when each machine has $O(\log S)$ local memory size. Also,  if each machine has local memory size $S$, then trivially the function can be computed in one round. Therefore, to show such a limitation, ideally, we would like to show the existence of a function computable in time $T$ and space $S$ in the RAM model but it requires $\Omega(T)$ rounds to compute for any MPC algorithms with local memory size $s < S$. We refer to this as the \emph{best-possible hardness} for the MPC model.

The hardness of the MPC model has been investigated by the seminal work of Roughgarden, Vassilvitskii, and Wang~\cite{RVW16}, who showed that there are functions requiring $\Omega(\log_s N)$ rounds to compute in the MPC model\footnote{In fact, the lower bound holds in a stronger model called $s$-shuffle circuits}. This gives a logarithmic lower bound when the local memory size $s = O(1)$, but only a constant lower bound for the typical settings where $s$ is polynomial in $N$. Nevertheless, \cite{RVW16} showed that an $\omega(\log_s N)$ round complexity lower bound in the MPC model for any problems in \textbf{P} implies $\textbf{P} \neq  \textbf{NC}^1$, which is beyond the reach of the current techniques in complexity theory. Thus, the $\Omega(\log_s N)$ lower bound is essentially the best we can hope for given the status of complexity theory if we look for unconditional lower bounds.

To circumvent the barrier in complexity theory, we borrow ideas from cryptography. Specifically, we investigate the hardness of the MPC model in the Random Oracle (RO) model based on the widely used random oracle methodology in cryptography, which we briefly review as follows.

\paragraph{Random Oracle Methodology.} 
 This is a popular methodology for designing cryptographic constructions, which consists of the following two steps. First, we consider the Random Oracle (RO) model where all parties have oracle access to a truly random function $\mathsf{RO}:\{0,1\}^n \rightarrow \{0,1\}^n$. We design and prove the security for a cryptographic construction in the (idealized) RO model. Next, we replace the random oracle by a  ``good cryptographic hashing function'' $h$ (such as SHA3) to obtain a concrete construction, and \emph{assume} that the construction has the same security as the ``ideal'' one analyzed in the RO  model. This methodology is widely used in both practice and theory to obtain more efficient and simpler constructions~\cite{BR93,BR94,Fischlin05,MMV11} or to achieve stronger security and new feasibility results~\cite{BCTW16,BHR12,Pass03}. Of course, replacing the oracle by a hash function is merely a heuristic. The validity of such heuristic has been investigated in the literature, where several counterexamples (i.e., constructions that are proven secure in the RO model but become insecure when the RO is instantiated by any concrete hash functions) are known~\cite{BBP04,CGH04-1,CGH04,GT03,MRH04,N02}. However, these counterexamples are contrived in the sense that they are constructed for this purpose, instead of obtaining a useful cryptographic construction. For all natural constructions, the heuristic holds so far and sometimes the proved security in the RO model matches the best-known attacks~\cite{CDGS18,DTT10,DGK17}. Indeed, the random oracle methodology is well-accepted in practice where many RO-based constructions (e.g., RSA-OAEP) have been used for years as part of the standard in practical cryptographic systems~\cite{ACPRT17,BR94}.

\subsection{Our Results and Techniques}\label{result}

We demonstrate a limitation of the MPC model by establishing a nearly best-possible hardness result in the Random Oracle model. 
Let $\mathsf{RO}:\{0,1\}^{n}\to\{0,1\}^{n}$ be a random oracle where making a query to $\mathsf{RO}$ takes $O(n)$ time.
Specifically, in the following theorem, we show the existence of a function in the RO model that can be computed in time $O(T\cdot n)$ and space $S$ by a RAM algorithm such that any MPC algorithm with local memory size $s \leq O(S)$ requires at least $\tilde{\Omega}(T)$ rounds to compute the function, even in the average case, for a wide range of  parameters $T$ and $S$. 

\begin{thm}[Nearly Best-Possible Hardness in the RO Model]\label{line_inf}
 There exists a universal constant $c>1$ such that for any sufficiently large $n>0$, the following holds.  For any $n\leq S<2^{O(n^{1/4})}$, $S\leq T<2^{O(n^{1/4})}$, there is an oracle function $f^\mathsf{RO}:\{0,1\}^S\to\{0,1\}^n$ such that it can be computed using memory of size $O(S)$ in $O(T\cdot n)$ time by a RAM computation with access to $\mathsf{RO}$, but for any (potentially randomized) massively parallel computation algorithm $\mathcal{A}^\mathsf{RO}$ with $m< 2^{O(n^{1/4})}$ machines, local memory of size $s$ where $s\leq S/c$, and the number of local queries per round $q<2^{n/4}$ to $\mathsf{RO}$, the probability that $\mathcal{A}^\mathsf{RO}$ computes $f^{\mathsf{RO}}$ correctly in $o(T/\log^2 T)$ rounds is at most $1/3$ over the random choice of $\mathsf{RO}$ and input.
\end{thm}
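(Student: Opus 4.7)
The plan is to realize $f^{\mathsf{RO}}$ as a chained-oracle (``line'') function whose evaluation requires $T$ sequential queries to $\mathsf{RO}$, and then to prove the MPC round lower bound via a compression argument: if some MPC algorithm were too fast, its transcript could be used to compress a random oracle below its information-theoretic entropy, contradicting incompressibility.

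\textbf{The line function.} Parse $x\in\{0,1\}^S$ as $k=S/n$ blocks $x_1,\ldots,x_k$, set $y_0=x_1$, and for $i=1,\ldots,T$ define $y_i=\mathsf{RO}(y_{i-1}\oplus x_{1+(i\bmod k)})$; the output is $f^{\mathsf{RO}}(x)=y_T$. A sequential RAM storing all of $x$ evaluates the chain in $T$ oracle calls of $O(n)$ time each, giving time $O(T\cdot n)$ and space $O(S)$. Because $\mathsf{RO}$ is uniform, every $y_i$ is information-theoretically unpredictable unless one queries $\mathsf{RO}$ at the specific preimage $p_i:=y_{i-1}\oplus x_{1+(i\bmod k)}$; this ``unpredictability along the line'' is the only property of $f^{\mathsf{RO}}$ I use, and the cyclic dependence on $x$ forces any would-be MPC solver to either hold the whole input (impossible when $s<S/c$) or communicate input blocks across rounds.

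\textbf{Compression.} Suppose toward contradiction that $\mathcal{A}^{\mathsf{RO}}$ outputs $f^{\mathsf{RO}}(x)$ correctly in $r=o(T/\log^{2} T)$ rounds with probability $\geq 1/3$. Fix its random tape to a good value and a typical $(x,\mathsf{RO})$ on which $\mathcal{A}$ succeeds. The core lemma (embodied by the $\Enc$/$\Dec$ pair, with $\SimLine$ playing the role of a simulator in which off-chain points on any ``jumped over'' portion of the line are re-randomized) says: if $\mathcal{A}$ outputs $y_T$ correctly, it must query $\mathsf{RO}$ at each point $p_i$ for $i=1,\ldots,T$; otherwise, swapping the never-queried $y_i$ for fresh randomness changes $y_T$ but not $\mathcal{A}$'s output, contradicting correctness. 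The encoder writes the input, $\mathcal{A}$'s coins, the per-round machine states (total $r\cdot ms$ bits), the answers to all off-line queries, and the chain index $i$ of each on-line query ($O(\log T)$ bits each). The decoder replays $\mathcal{A}$: off-line answers come from the encoding, and every on-line answer $y_i$ is reconstructed from $y_{i-1}$ and $x$ via the chain recurrence, so each on-line query saves $n$ bits. Summing the $T$ required on-line queries against the $\tilde O(r\cdot ms)$-bit bookkeeping yields compression of $\mathsf{RO}$ below $2^n\cdot n$ bits whenever $r\cdot ms\cdot\polylog T\ll T\cdot n$, contradicting incompressibility of a random function.

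\textbf{Main obstacle.} The heart of the proof is the bookkeeping inequality: making the ``on-line indices cover $\{1,\ldots,T\}$'' claim robust to randomized branching within a round (machines may query $\mathsf{RO}$ adaptively and abort subchains based on answers) and tight enough that the $\polylog$ factors in the encoding do not swamp the $n$-bit savings. The standard remedy is to define exposure formally, showing that the exposed set grows by at most $\tilde O(ms/n)$ positions per round on average, and to handle randomness by a union bound over a pruned tree of adaptive branchings; this is where the $\log^{2} T$ slack in the theorem statement originates (one factor for indexing, one for the good-randomness restriction). Extending from worst case to average over $(x,\mathsf{RO})$ and to randomized $\mathcal{A}$ is then a standard averaging argument.
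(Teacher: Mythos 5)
There is a genuine gap, and it is the central one: the function you construct is essentially the paper's \emph{warm-up} function $\SimLine$, not the actual hard function, and it does not satisfy the theorem. With the deterministic cyclic indexing $x_{1+(i\bmod k)}$, a single machine that stores $y_{i-1}$ together with the next $s/n$ input blocks in cyclic order can advance the chain by $s/n = S/(cn)$ positions in a \emph{single} round, because the MPC model permits up to $q$ adaptive oracle queries per machine per round. So the chain is computed in $O(Tn/S)$ rounds, which is far below $T/\log^2 T$ whenever $S \gg n\log^2 T$ (the theorem's range allows $S$ up to $2^{O(n^{1/4})}$). The paper makes exactly this point: the cyclic version only yields an $\Omega(Tu/s)$ lower bound. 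To get $\tilde\Omega(T)$ one must make the index of the next input block \emph{oracle-determined and unpredictable} --- i.e., $(\ell_{i+1},r_{i+1},z_{i+1}) := \mathsf{RO}(i,x_{\ell_i},r_i,0^*)$ --- so that a machine holding any fixed $1/c$ fraction of the blocks advances $p$ steps only with probability roughly $(1/c)^p$, which is what caps the per-round progress at $\log^2 T$ nodes.

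Your compression bookkeeping also does not close. The inequality you aim for, $r\cdot ms\cdot\polylog T \ll T\cdot n$, is in terms of the \emph{total} memory $ms$, but the theorem (and the paper) explicitly allow $ms \gg S$; only the per-machine memory $s$ is bounded, so this inequality gives nothing. The paper's argument is per-machine and per-round: it defines, for each machine $i$ and round $k$, a set $B_i^{(k)}$ of input indices that the machine's round-$k$ queries could reveal \emph{under any of the $v^{\log^2 w}$ possible continuations of the index sequence} (the modified oracles $\mathsf{RO}^{(k)}_{a_1,\dots,a_{\log^2 w}}$), bounds $|B_i^{(k)}| \lesssim s/u$ by compressing $(\mathsf{RO},X)$, and only then uses the independence of the true $\ell_i$'s from $B_i^{(k)}$ to get the $(h/v)^{\log^2 w}$ decay. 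This enumeration over index sequences is precisely what breaks the correlation between the machine's stored information and the oracle, and it has no counterpart in your sketch. A separate (fixable) issue is that ``$\mathcal{A}$ must query every $p_i$'' is argued in the paper not by a global re-randomization of $y_T$ but by a per-entry union bound (Lemma on $E^{(k)}$) showing a query cannot hit node $i+1$ before node $i$ except with probability $2^{-u}$ per query; your global swapping argument would need care since the decoder must still be able to replay $\mathcal{A}$ consistently.
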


In particular, for any parameters $T$ and $S$, by setting $n = \polylog(T)$, Theorem~\ref{line_inf} shows the existence of an oracle function computable in time $\tilde{O}(T)$ and space $O(S)$ by a RAM algorithm where any MPC algorithm with local memory size sufficiently smaller than $S$ requires $\tilde{\Omega}(T)$ rounds to compute the function. 
As discussed above, this is the best-possible hardness up to a poly-logarithmic factor. 
Furthermore, following the random oracle methodology, we can instantiate the random oracle with a good cryptographic hash function $h$ with time complexity $t_h = \mathrm{poly}(n)$. We then obtain a concrete hard function $f^h:\{0,1\}^S\to\{0,1\}^n$ that can be  computed in time $\tilde{O}(T)$ and space $O(S)$ by a RAM algorithm, yet assuming the validity of the random oracle methodology, $f^h$ is hard to compute for any (randomized) MPC algorithm with local memory of size $s\leq S/c$ for some constant $c>1$.\footnote{For this conclusion we would need to set $n = \polylog(T)$ and requires $h$ to have sub-exponential hardness. While such assumption is strong, it is commonly assumed in practical RO-based cryptographic systems.} This is the best-possible hardness up to a poly-logarithmic factor. Note that the hardness holds even when the total memory size $ms \gg S$ as long as the local memory size is bounded. 

We remark that the way our hard function $f^\mathsf{RO}$ makes use of the random oracle is quite standard and analogous to several existing cryptographic constructions (e.g.,~\cite{ACKKPT16,ACPRT17,MMV11}), so it is unlikely that the random oracle methodology fails to apply to our hard function $f^\mathsf{RO}$ (see more discussion in Section~\ref{related}). 
Thus, one way to interpret our result is that either $f^h$ indeed shows a fundamental limitation of parallelization in the MPC model, or gives a natural counter-example for the random oracle methodology (which would be surprising).

In the following, we describe the hard function we consider and explain the intuition of its hardness. To help illustrate the idea, let us start with a warm-up (hard) function, which we analyze formally in Appendix~\ref{warmup} for the sake of completeness. Let $\mathsf{RO}:\{0,1\}^{n}\to\{0,1\}^{n}$ be a random oracle and let $T,u,v$ be the parameters specified later. Consider the function $\SimLine^\mathsf{RO}_{n,T,u,v}\footnote{The name $\SimLine$ stands for ``simple-line,'' which is in contrast to the hard function $\Line$  we introduce below.}:\{0,1\}^{uv}\to\{0,1\}^n$ defined as follows. The input is parsed as $v$ strings $x_i\in\{0,1\}^u$ for all $i\in [v]$. On input $x=x_1,x_2,...,x_v$, the output of $\SimLine^\mathsf{RO}_{n,T,u,v}(x)$ is defined by iteratively applying $\mathsf{RO}$ as follows. Let $r_1=0^u$, and 
\begin{align*}
    &(r_{i+1},z_{i+1}) \coloneqq \mathsf{RO}(x_{i\Mod v},r_i,0^*),\quad \forall i\in[T],
\end{align*}
the output of $\SimLine_{n,T, u, v}^\mathsf{RO}(x)$ is defined as the answer to the last query, $(r_{T+1},z_{T+1})$. In other words, we can view $\SimLine_{n,T,u,v}$ as defined by a line of $T$ nodes, where the first node is associated with initial values $r_1=0^u$, and for each $i\in [w]$, the values of next node $i+1$ is obtained by querying the oracle on $(x_{i\Mod v},r_i,0^*)$. 

To get some intuition of the hardness, first note that since the oracle is random, intuitively, the only way to learn the output $(r_{T+1},z_{T+1})$ is to make queries to learn the value of each node (i.e., $(r_i,z_i)$) in order, which requires to know the corresponding input $x_{i\Mod v}$. However, since the local memory of each machine is bounded by $s$, intuitively, a machine can only store at most  $s/u$ inputs $x_i$'s. Thus, in each round, the machines can only learn the value of at most $s/u$ new nodes. Therefore, a MPC algorithm with local space $s$ would need  $\Omega(Tu/s)$ rounds to compute $\SimLine_{n,T, u, v}^\mathsf{RO}(x)$.

%

Formalizing the above intuition, however, is non-trivial, since the algorithm may encode the inputs arbitrarily and also the machines may collaborate in an arbitrary way.  Thus, it would be difficult to formalize what information is stored in the machine's memory and how much the algorithm learns about the line directly. For the warm-up case of $\SimLine_{n,T, u, v}^\mathsf{RO}$, we can formalize this intuition by a rather standard use of the so-called ``compression argument'', a powerful technique for establishing lower bounds in both data structure and cryptography literature~\cite{ACPRT17,AS15,DGK17,DTT10,LN18,PD06}.
To give some intuition about the argument, consider an MPC algorithm computing the function one by one along the line. For each round, the queries of a machine must contain the corresponding input $x_{i\Mod v}$ in order to proceed along the line. Thus, by examining the queries of this machine, we can infer which part of the input is stored in the local memory. Now, if a machine learns too many nodes in one round, it reveals that it stores many $x_i$'s in its small local memory. The key of the compression argument is to show that this would allow us to compress the input $x$ and the random oracle $\mathsf{RO}$ beyond the information-theoretic limit, which is a contradiction. We defer the formal analysis of $\SimLine_{n,T, u, v}^\mathsf{RO}$ to Appendix~\ref{warmup}.

Note that $\SimLine_{n,T, u, v}^\mathsf{RO}$ can be computed in time $O(Tn)$ by a RAM program, but above we only argue a $\Omega(Tu/s)$, instead of $\tilde{\Omega}(T)$, lower bound for the round complexity of MPC algorithms. To prove the desired hardness result, we make the function harder by letting each node take a random input $x_{\ell_i}$ instead of $x_{i\Mod v}$, where the index $\ell_i$ is specified by the random oracle (like $r_i$). 

More precisely, we now describe our hard function $\Line^\mathsf{RO}_{n,T,u,v}:\{0,1\}^{uv}\to\{0,1\}^n$ as follows (see Figure~\ref{linegraph} for a pictorial illustration). The input is parsed as $v$ strings $x_i \in \{0,1\}^u$ for $i\in [v]$. On input $x=x_1,x_2,...,x_v$, the output of $\Line^\mathsf{RO}_{n,T,u,v}(x)$ is defined by iteratively applying $\mathsf{RO}$ as follows. Let $\ell_1 = 1$ and $r_1=0^u$, and 
\begin{align*}
    &(\ell_{i+1},r_{i+1},z_{i+1}) \coloneqq \mathsf{RO}(i,x_{\ell_i},r_i,0^*),\quad \forall i\in[T],
\end{align*}
the output of $\Line_{n,w, u, v}^\mathsf{RO}(x)$ is defined as the answer to the last correct query $(\ell_{T+1},r_{T+1},z_{T+1})$. Similarly, we can view $\Line_{n,T, u, v}$ as defined by a line of $T$ nodes, where the first node is associated with initial values $\ell_1 = 1$ and $r_1=0^u$, and for each $i\in [T]$, the values of next node $i+1$ is obtained by using $\ell_i$ to select an input $x_{\ell_i}$ and query the oracle on $(i,x_{\ell_i},r_i,0^*)$. Clearly, the function can be evaluated with $O(uv)$ space and $O(Tn)$ time by following the evaluation over the line. Thus, for given parameters $S$ and $T$, we can set $v = S/u$  to get a function with RAM complexity specified in Theorem~\ref{line_inf}.

Now, intuitively, since $s \leq S/c$ for a constant $c$,  a machine can only store a constant fraction of $x_i$'s, and since $\ell_i$'s are random, the probability that a machine can learn the value of $k$ new nodes should decay exponentially in $k$. Thus, ``with high probability,'' a MPC algorithm can learn at most, say, $\log^2 T$ new nodes, and hence it would require $\tilde{\Omega}(T)$ rounds to compute $\Line^\mathsf{RO}_{n,T,u,v}$, which gives us the desired hardness.

However, as above, formalizing this intuition is tricky, since a-priori there is no independence between the information stored by a machine and the indices $\ell_i$, and thus it is not clear whether the probability of learning $k$ new nodes decays exponentially in $k$. Roughly, we capture this intuition of exponential probability decay indirectly by a novel tweak to the compression argument. 
Specifically, in our proof, we enumerate all the oracles with different sequences of $k$ consecutive $\ell$'s, say $\ell_i,...,\ell_{i+k}$, and run the machine on these oracles. By considering all the queries obtained this way, we can formalize such exponential probability decay in the compression argument provided that $k$ is not too large, which allows us to show that a MPC algorithm can learn at most   $\log^2 T$ new nodes each round and hence require $\tilde{\Omega}(T)$ rounds to compute $\Line^\mathsf{RO}_{n,T,u,v}$.
As the argument is more involved, we defer a more detailed technical overview and the formal proof to Section~\ref{sec:main-thm}.

\subsection{Related Work}\label{related}
Massively Parallel Computation Model (a.k.a MapReduce Model) was proposed in~\cite{KSV10}, and 
has been refined and extended in~\cite{BKS13, ANOY14, RVW16}.
Many algorithmic techniques and problems have been studied in MPC model such as 
greedy algorithms~\cite{KMVV15}, 
dynamic programming~\cite{bateni2018massively,IMS17}, 
linear programming~\cite{assadi2019distributed},
graph algorithms~\cite{AG18,ANOY14,
AndoniSSWZ18,
andoni2019log,
Assadi17,
ABBMS17,AK17, arXiv,BKV12, 
BehnezhadDTK18,
behnezhad2019exponentially,
BrandtFU18,
ChangFGUZ19,
CLMMOS18,
GhaffariGMR18,
GhaffariU19,
gamlath2018weighted,
LackiMW18,LMSV11,
Onak18,RastogiMCS13},
clustering~\cite{BMVKV12,BBLM14,EIM11, ghaffari2019improved,YV17}, 
submodular function optimization~\cite{PENW16,EN15, KMVV15, MKSK13},
and query optimization~\cite{BKS13}.

The tradeoffs between space (total memory), communication and 
the number of communication rounds in MPC model have been studied. 
Pietracaprina et al.~\cite{PPRSU12} studied the space-round tradeoffs for certain kinds of matrix multiplication algorithm.
Afrati et al.~\cite{ASSU13} investigated the space-communication tradeoffs for single round algorithms.
Beame et al.~\cite{BKS13} studied the tradeoff between the amount of communication and the number of rounds.

Roughgarden et al.~\cite{RVW16} proved an $\lfloor \log_s n\rfloor$ round unconditional lower bounds. 
When local memory per machine $s$ is polynomially related to $n$, which is usually assumed in the MPC model, this gives a constant round lower bound. 
Fish et al.~\cite{FKLRT15} proved hierarchy theorems with respect to the computation time per processor.

Conditioned on the conjecture that graph connectivity cannot be solved in $o(\log n)$ communication rounds for memory per machine sublinear in the number of vertices,
Ghaffari et al.~\cite{GKU19} showed that
constant approximation of maximum matching,  vertex cover, and maximum independent set cannot be solved in $o(\log \log n)$ rounds, and the Lov{\'{a}}sz Local Lemma problem cannot be solved in $o(\log \log \log  n)$ rounds. 
Under the same conjecture, Yaroslavtsev et al.~\cite{YV17}
showed that single-linkage clustering cannot be approximated by constant factor in $o(\log n)$ rounds.
Recently, Nanongkai and Scquizzato showed that 
this conjecture is equivalent to the conjecture that log-space complete problem can not be solved in $o(\log n)$ rounds, and consequently, a large class of graph problems, such as single source shortest path, minimum cut, and planarity testing, require asymptotically the same number of
rounds under these assumptions~\cite{nanongkai2020equivalence}.


Another well-studied parallel computation model is the PRAM model. In this model, there is a polynomial number of processors and a shared memory. In each time (synchronized round), each processor can
read a constant number of memory cells from the shared memory, do some local computation, and write to a memory cell in the shared memory. Similar to the MPC model, it is known that super-logarithmic lower bound on the parallel time in the PRAM model implies strong circuit lower bounds. Therefore, some assumptions are needed in order to prove stronger lower bounds. 
To our knowledge, 
PRAM lower bound does not imply MPC lower bound in a trivial way, due to the local computation of MPC in every single round.
For example, Miltersen~\cite{Mil92} showed a strong lower bound in the random oracle model using a certain pointer jumping problem for PRAM. However, we note that the problem considered in~\cite{Mil92} is not hard in the MPC model. The reason is that in the MPC model, a local machine can make an arbitrary number of queries to the oracle in one round, and thus solve the problem considered in~\cite{Mil92} in one round.

The way that our hard function uses the random oracle is  analogous to several existing cryptographic constructions; in particular, the line of research in memory hard functions (MHFs)~\cite{ABP18,ACKKPT16,ACPRT17,AS15}. MHFs are hash functions whose evaluation cost is dominated by memory cost. MHFs found widespread applications such as password hashing, key derivation, and proofs-of-work, and some important candidates, e.g., $\mathsf{scrypt}$, are described in the RFC standard. The security (i.e., lower bounds on the so-called ``cumulative memory complexity'') of MHFs is analyzed in the RO model based on the random oracle methodology. As our construction uses RO in an analogous way as practically-used MHFs (both rely on sequential queries to the oracle), in our eyes, it would be quite surprising that our hard function becomes a counterexample to the random oracle methodology. 

Technically, our analysis is inspired by the analysis of MHFs, which also relies on the compression argument. However, we stress that the models are quite different and we cannot directly rely on the analysis of MHFs to establish the hardness of the MPC model. The main reason is that in the MPC model, the machines can make an \emph{arbitrary} number of adaptive queries to the oracle for free in one round, whereas the need of adaptive queries is the source of hardness for high cumulative memory complexity. Hence, our  $\Line^\mathsf{RO}_{n,T,u,v}$ function relies on a different reason (specifically, the fact that each machine is space bounded) to get hardness, and requires a different analysis from the MHFs.

The rest of the paper is organized as follow.  We define the massively parallel computation model in both the plain and the RO model in Section \ref{sec:model}. In Section \ref{sec:main-thm}, we state and prove our main theorem on the best-possible hardness for the MPC model.

\section{The Massively Parallel Computation Model} \label{sec:model}
Let $[n]=\{1,2,...,n\}$.
We adopt the Massively Parallel Computation Model (also known as MapReduce Model) according to \cite{KSV10}.
In this model, we are given a set of machines with a fixed size of local memory.
The input data is distributed across machines arbitrarily.
The computation proceeds in rounds. During a round, each machine runs a polynomial time algorithm on the data assigned to the machine. 
No communication between machines is allowed during a round.
Between rounds, machines are allowed to communicate so long as each machine receives no more communication than its memory.
Any data output from a machine must be computed locally from the data residing on the machine.

Formally, we define the (randomized) massively parallel computation with following parameters.
\ifnum\submission=0
\begin{table}[ht]
\begin{center}
\begin{tabular}{|p{8cm}|}
\hline
$s$: the local memory size for each machine\\
$m$: the number of machines\\
$N$: the size of the input\\
\hline
\end{tabular}
\end{center}
\vspace{-.6cm}
\caption{Parameters of massively parallel computation}
\label{tab}
\end{table}
\else
\begin{table}[H]
\begin{center}
\begin{tabular}{|p{7cm}|}
\hline
$s$: the local memory size for each machine\\
$m$: the number of machines\\
$N$: the size of the input\\
\hline
\end{tabular}
\end{center}
\caption{Parameters of massively parallel computation}
\label{tab}
\end{table}
\fi

\begin{defn}[Massively Parallel Computation]
A massively parallel computation consists of 
$m$ machines, local memory of size $s$ for each machine,
and a shared, read-only, and multiple access tape $\mathcal{T}$ containing an arbitrarily long random bit string.
The computation proceeds by round and starts from round $0$. Let $M_i^k \in \{0, 1\}^s$ be the local memory (input) of machine $i$ at the beginning of round $k$. Initially, the given input $x\in\{0,1\}^N$ is arbitrarily split and distributed among all the machines, i.e. each $M_i^0$ is assigned with an arbitrary partition of $x$.

In each round $k$, each machine $i$ runs a polynomial time algorithm $\mathcal{A}_{i}^k$ based on its local memory $M_i^k$ and the shared tape $\mathcal{T}$, and outputs $M_{i,j}^k$ to machine $j$ for all the $j \in [m]$. $\bigcup_{j\in [m]} M_{j,i}^k$ is the input of machine $i$ of $(k+1)$-th round.
\end{defn}
Note that in the above definition, it is required that the size of $\bigcup_{j\in [m]} M_{j,i}^k$ is smaller than $s$, the size of local memory. We refer to a MPC computation terminated at the end of round $R$ as a $R$-round MPC computation.


We consider the massively parallel computation with a random oracle.

\begin{defn}[Massively Parallel Computation with Oracle]
A massively parallel computation consists of 
$m$ machines, local memory of size $s$ for each machine,
 a shared, read-only, and multiple access tape $\mathcal{T}$ containing an arbitrarily long random bit string,
 and  a random oracle $\mathsf{RO}:\{0,1\}^h\to \{0,1\}^c$ which is uniformly drawn from all possible functions before the computation begins.

The computation proceeds by round and starts from round $0$. Let $M_i^k \in \{0, 1\}^s$ be the local memory (input) of machine $i$ at the beginning of round $k$. Initially, the given input $x\in\{0,1\}^N$ is arbitrarily split and distributed among all the machines, i.e. each $M_i^0$ is assigned with an arbitrary partition of $x$.

In each round $k$, each machine $i$ runs a polynomial time algorithm $\mathcal{A}_{i}^k$ based on its local memory $M_i^k$, the shared tape $\mathcal{T}$ and the (adaptive) queries of the random oracle, and outputs $M_{i,j}^k$ to machine $j$ for all the $j \in [m]$. $\bigcup_{j\in [m]} M_{j,i}^k$ is the input of machine $i$ of $(k+1)$-th round.
\end{defn}
\begin{remark}\label{ran}
    As a standard observation, in random oracle model, without loss of generality, we can \emph{consider only deterministic} MPC algorithm since the algorithm can use the randomness from the random oracle. In more detail,
    we can use a random oracle with a larger input domain and a deterministic MPC can simulate a randomized MPC by obtaining random bits from querying those extra oracle entries that are not used by the randomized MPC. Therefore, to establish a lower bound for the randomized MPC model, it suffices to consider deterministic MPC algorithms.
\end{remark}

\begin{defn}[Worst Case Correctness]
We say that a randomized $R$-round MPC computation (with random oracle) successfully computes a (oracle) function $f$ in worst case if for any input $x\in \{0,1\}^N$ which is arbitrarily distributed among the machines, the union of outputs of all the machines at the end of round $R$ is $f(x)$ with probability at least $\frac{1}{3}$ over the randomness of the MPC computation (and the random oracle). 
\end{defn} 
We also consider average case correctness.
\begin{defn}[Average Case Correctness]
We say that a randomized $R$-round MPC computation (with random oracle) successfully computes a (oracle) function $f$ in average case if given an input $x\in\{0,1\}^N$ which is drawn uniformly and arbitrarily distributed among the machines, the union of outputs of all the machines at the end of round $R$ is $f(x)$ with probability at least $\frac{1}{3}$ over the randomness of the input, the MPC computation (and the random oracle). 
\end{defn}

\section{Main Theorem} \label{sec:main-thm}
In this section, we state our main theorem. 
\begin{thm}\label{line}
There exists a universal constant $c>1$ such that for any sufficiently large $n>0$, let $\mathsf{RO}:\{0,1\}^n\to\{0,1\}^n$ be a random oracle and for any memory size $n\leq S<2^{O(n^{\frac{1}{4}})}$, and running time $S\leq T<2^{O(n^{\frac{1}{4}})}$, there is an oracle function $f^\mathsf{RO}:\{0,1\}^{S}\to\{0,1\}^n$ such that it can be computed in time $O(T\cdot n)$ using memory size $O(S)$ by a RAM algorithm in random oracle model. On the other hand, let $\mathcal{A}^O$ be a randomized massively parallel computation with $m<2^{O(n^{\frac{1}{4}})}$ machines, local memory of size $s\leq S/c$ and the number of local queries {$q<2^{n/4}$} to random oracle per round. Then, in random oracle model, $\mathcal{A}^\mathsf{RO}$ needs at least $ \tilde{\Omega}(T)$ rounds to compute $f^\mathsf{RO}$ even in average case. 
\end{thm}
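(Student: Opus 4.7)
The plan is to instantiate the hard function as $\Line_{n,T,u,v}^\mathsf{RO}$ defined in the introduction, with $u = \Theta(n)$ and $v = \Theta(S/u)$ so that the input length is $uv = \Theta(S)$. The RAM upper bound is essentially definitional: store the whole input $x_1,\ldots,x_v$ in $O(S)$ space, maintain the current state $(\ell_i, r_i)$, and for $T$ iterations issue one $\mathsf{RO}$ query on the $O(n)$-bit string $(i, x_{\ell_i}, r_i, 0^*)$ to advance along the line, for total time $O(T \cdot n)$. The output $(\ell_{T+1}, r_{T+1}, z_{T+1})$ is read off the last query.

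For the MPC lower bound, by Remark~\ref{ran} I may assume $\mathcal{A}^\mathsf{RO}$ is deterministic. I would call node $i$ \emph{learned} in round $k$ if some machine in that round queries $\mathsf{RO}$ on the correct input $(i, x_{\ell_i}, r_i, 0^*)$. A standard lazy-evaluation argument over the random oracle shows that, except with negligible probability over $(\mathsf{RO}, x)$, no incorrect query ever coincides with a correct query of a node; combined with the fact that $(\ell_{i+1}, r_{i+1})$ is a fresh random value revealed only upon learning node $i$, this forces the algorithm to learn nodes in the order $1, 2, \ldots, T$. Hence outputting $f^\mathsf{RO}(x)$ requires learning all $T$ nodes, and it suffices to bound the number of newly-learned nodes per round.

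The technical heart will be to show that, with high probability over $(\mathsf{RO}, x)$, no single machine in any round learns more than $k = O(\log^2 T)$ consecutive new nodes. Fix a machine $M$ with memory state of size $s \leq S/c$, a round, and a potential run of $k$ consecutive new nodes starting at some position $i_0$. If $M$ learns all of them, its queries in that round must contain the input blocks $x_{\ell_{i_0}}, \ldots, x_{\ell_{i_0+k-1}}$, which are the $k$ specific blocks chosen by the random pointers $\ell_j$. Because $s/u \leq v/c$, a uniformly random pointer hits a block stored by $M$ with probability at most $1/c$, so intuitively the probability of learning all $k$ blocks should decay like $(1/c)^k$. Making this rigorous is the step that distinguishes our function from the warm-up $\SimLine$: I will follow the enumeration trick sketched in Section~\ref{result}, iterating over candidate $k$-tuples of pointer values, considering the hypothetical oracles that induce each such $\ell$-sequence, and pooling the resulting queries made by $M$ into a short description that recovers the $k$ input blocks. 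If the failure probability exceeded the target, this description would compress $(\mathsf{RO}, x)$ below its entropy, a contradiction.

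Finally, I would union-bound the preceding event over the $m < 2^{O(n^{1/4})}$ machines, the $T$ possible starting positions $i_0$, the rounds, and the possible local memory configurations (using the assumption $q < 2^{n/4}$ to bound the transcript size). With $k = \Theta(\log^2 T)$ the union bound gives a failure probability below $1/3$ in the parameter range of the theorem, and together with the sequential-learning property the algorithm must take at least $T/k = \tilde{\Omega}(T)$ rounds. The main obstacle is precisely the compression step: a machine's memory can depend on $(\mathsf{RO}, x)$ in arbitrary and adversarial ways, so there is no a priori independence between what is stored and the random pointers $\ell_j$. Decoupling these via the enumeration over oracles, while carefully accounting for how many bits of $x$ and of the oracle table the memory plus transcript can predict, is where I expect all the delicate bookkeeping to reside.
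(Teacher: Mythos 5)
Your proposal follows essentially the same route as the paper: it instantiates the hard function as $\Line^{\mathsf{RO}}_{n,T,u,v}$ with the same parameter choices, forces sequential learning of nodes via a lazy-evaluation argument (the paper's Lemma~\ref{nojump2}), and bounds the per-round progress of each machine by $O(\log^2 T)$ nodes using the compression argument with enumeration over candidate $\ell$-sequences (the paper's sets $B_i^{(k)}$, oracles $\mathsf{RO}^{(k)}_{a_1,\ldots,a_{\log^2 w}}$, and Lemma~\ref{bound}), concluding by induction and a union bound. The plan is correct in outline and matches the paper's proof strategy, with the remaining work being exactly the encoding/decoding bookkeeping the paper carries out in Claim~\ref{enc2}.
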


The parameters are summarized in Table \ref{tab:tab2}.
\ifnum\submission=0

\begin{table}[ht]
\begin{center}
\begin{tabular}{|p{15cm}|}
\hline
$n$: the size of input and output of the random oracle\\
$S$: the memory size used by the RAM algorithm such that
$   n \leq S< 2^{O(n^{1/4})}$\\

$T$: the number of random oracle queries used  by RAM algorithm $S\leq T< 2^{O(n^{1/4})}$\\
$q$: the upper bound on the number of random oracle queries for every machine in each round
such that $q = 2^{O(n)}$\\
\hline
\end{tabular}
\end{center}
\caption{Parameters of Theorem~\ref{line}}
\label{tab:tab2}
\end{table}

\else

\begin{table}[H]
\begin{center}
\begin{tabular}{|p{7cm}|}
\hline
{
$n$: the size of input and output of the random oracle}\\
$S$: the memory size used by the RAM algorithm such that
$   n \leq S< 2^{O(n^{1/4})}$\\
$T$: the number of random oracle queries used  by RAM algorithm $S\leq T< 2^{O(n^{1/4})}$
\\
$q$: the upper bound on the number of random oracle queries for every machine in each round
such that $q = 2^{O(n)}$\\

\hline
\end{tabular}
\end{center}
\caption{Parameters of Theorem~\ref{line}}
\label{tab:tab2}
\end{table}

\fi



As we discussed in the introduction, for any parameters $T$ and $S$, by setting $n = \polylog(T)$ and instantiating the random oracle with a  cryptographic hash function $h$ with sub-exponential hardness, we obtain a concrete hard function $f^h:\{0,1\}^S\to\{0,1\}^n$ that can be  computed in time $\tilde{\Omega}(T)$ and space $O(S)$ by a RAM algorithm, yet assuming the validity of the random oracle methodology, $f^h$ is hard to compute for any (randomized) MPC algorithm with local memory of size $s\leq S/c$ for some constant $c>1$.

\ifnum\submission=0
\begin{table}[ht]
\begin{center}
\begin{tabular}{|p{15cm}|}
\hline
$u$: the size of each $x_i$ such that $u = n / 3$. $u$ is assumed to be large enough as otherwise, machine may guess it locally with non-trivial probability\\
$v$: the number of $x_i$'s in the input such that 
$v = S / u$\\
$w$: the number of iterations of the random oracle for the $\Line^{\mathsf{RO}}$ function such that 
$w = T$\\
$\ell_i$: $\lceil \log v \rceil$ bits of output of $(i-1)$-th iteration of the random oracle, which is used to specify the $x_{\ell_i}$ which is part of the input of $i$-th iteration of the random oracle\\
$r_i$: $u$ bits of the output of $(i-1)$-th iteration of the random oracle, which is used as part of the input of $i$-th iteration\\
$z_i$: redundant output of $(i-1)$-th iteration\\
\hline
\end{tabular}
\end{center}
\vspace{-.6cm}
\caption{Parameters of $\Line^{\mathsf{RO}}$ function given input $x_1, x_2, \dots, x_v$}
\label{tab:tab3}
\end{table}

\else

\begin{table}[H]
\begin{center}
\begin{tabular}{|p{7cm}|}
\hline
$u$: the size of each $x_i$ such that $u = n / 3$. $u$ is assumed to be large enough as otherwise, machine may guess it locally with non-trivial probability\\
$v$: the number of $x_i$'s in the input such that 
$v = S / u$\\
$w$: the number of iterations of the random oracle for the $\Line^{\mathsf{RO}}$ function such that 
$w = T$\\
$\ell_i$: $\lceil \log v \rceil$ bits of output of $(i-1)$-th iteration of the random oracle, which is used to specify the $x_{\ell_i}$ which is part of the input of $i$-th iteration of the random oracle\\
$r_i$: $u$ bits of the output of $(i-1)$-th iteration of the random oracle, which is used as part of the input of $i$-th iteration\\
$z_i$: redundant output of $(i-1)$-th iteration\\
\hline
\end{tabular}
\end{center}
\caption{Parameters of $\Line^{\mathsf{RO}}$ function given input $x_1, x_2, \dots, x_v$}
\label{tab:tab3}
\end{table}

\fi

Now we formally define the oracle function.
The parameters  are summarized in Table \ref{tab:tab3}.
Let $\Line_{n,w,u,v}^\mathsf{RO}:\{0,1\}^{uv}\to\{0,1\}^n$ be defined as follows: Given input $x=x_1,x_2,...,x_v$ such that $x_i\in\{0,1\}^u$ for all $i\in [v]$ and a random oracle $\mathsf{RO}:\{0,1\}^{n}\to\{0,1\}^{n}$, let $r_1=0^u$, $\ell_1=1$,
and 
\begin{align*}
    &(\ell_{i+1},r_{i+1},z_{i+1}) \coloneqq \mathsf{RO}(i,x_{\ell_i},r_i,0^*),\quad \forall i\in[w],
\end{align*}
the output of $\Line^\mathsf{RO}_{n,w, u, v}(x)$ is defined as the answer to the last correct query, $(\ell_{w+1},r_{w+1},z_{w+1})$. See Figure~\ref{linegraph} for an illustration.  Given the parameters $S,T$ in Theorem~\ref{line}, we set the parameters of $\Line^\mathsf{RO}$ as described in Table~\ref{tab}. 
\ifnum\submission=0
\begin{figure}
    \centering
    \includegraphics[scale = .90]{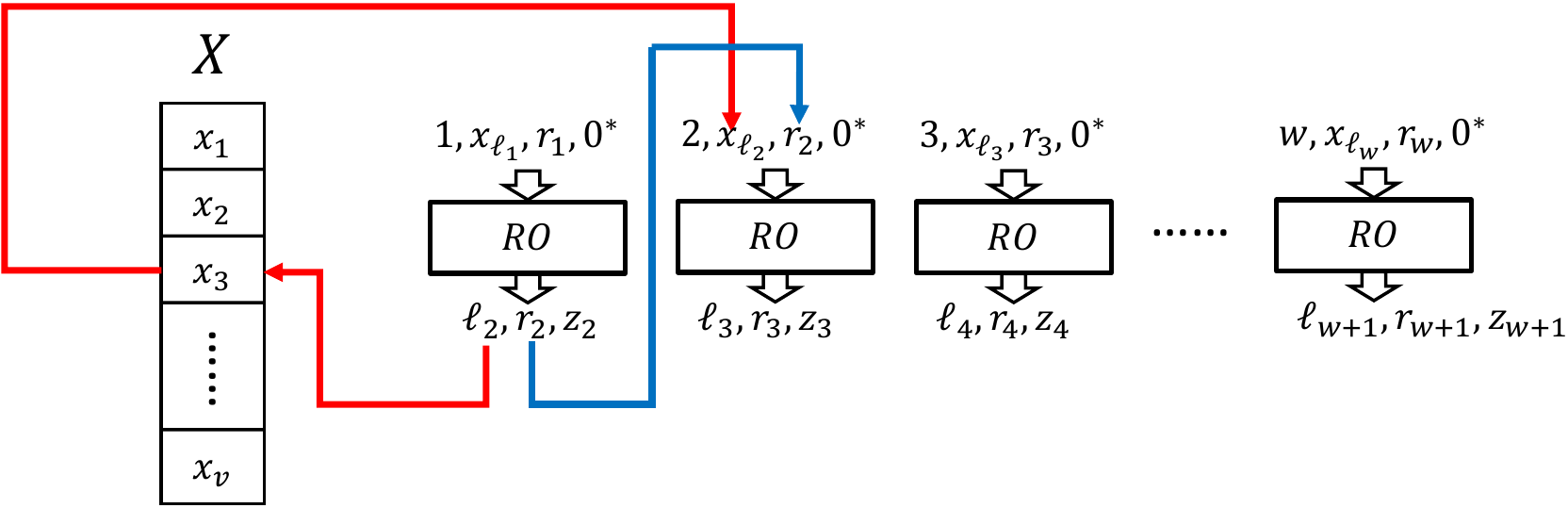}
    \caption{an illustration on how the $\Line^{\mathsf{RO}}$ is formed (suppose $\ell_2=3$). Each $RO$ box represents a correct random oracle query. 
    Note that each machine is not able to store the entire $X$.}
    \label{linegraph}
\end{figure}
\else
\begin{figure}[H]
    \centering
    \includegraphics[width=3.3in]{mapreduce_line_1.pdf}
    \caption{an illustration on how the $\Line^{\mathsf{RO}}$ is formed (suppose $\ell_2=3$). Each $RO$ box represents a correct random oracle query. 
    Note that each machine is not able to store the entire $X$.}
    \label{linegraph}
\end{figure}
\fi
 The obvious RAM algorithm already achieves the required performance on memory size and running time. As the standard observation in Remark~\ref{ran}, without loss of generality, we can assume the MPC computation is deterministic. Thus, Theorem~\ref{line} follows from the following lemma.

\begin{lma}\label{lmaline}
There exists a universal constant $c>1$ such that for any sufficiently large $n>0$, let $\mathsf{RO}:\{0,1\}^n\to\{0,1\}^n$ be a random oracle and for any $n\leq S<2^{O(n^{\frac{1}{4}})}$ and $S\leq T<2^{O(n^{\frac{1}{4}})}$, consider the function $\Line^\mathsf{RO}_{n,w,u,v}:\{0,1\}^{uv}\to\{0,1\}^n$ where $w=T$, $v=S/u$ and $u=n/3$. Let $\mathcal{A}^\mathsf{RO}$ be a deterministic massively parallel computation with $m<2^{O(n^{\frac{1}{4}})}$ machines, local memory of size $s\leq S/c$ and a number of 
at most $q<2^{n/4}$ random oracle queries per round per machine. Then, in random oracle model, $\mathcal{A}^\mathsf{RO}$ needs at least $R\geq \frac{w}{\log^2w}= \tilde{\Omega}(T)$ rounds to compute $\Line^\mathsf{RO}_{n,w,u,v}$ even in average case.
\end{lma}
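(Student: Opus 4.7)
The plan is to prove Lemma~\ref{lmaline} by a compression argument: suppose for contradiction that a deterministic MPC algorithm $\mathcal{A}^\mathsf{RO}$ computes $\Line^\mathsf{RO}_{n,w,u,v}$ in $R < w/\log^2 w$ rounds with success probability at least $1/3$ over the random choice of $(\mathsf{RO}, x)$. I would design an encoding/decoding pair that, using $\mathcal{A}$ as a subroutine, describes the ``relevant'' portion of the random oracle together with the input $x$ in strictly fewer bits than the information-theoretic lower bound, yielding a contradiction.

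As preliminaries, I would first invoke Remark~\ref{ran} to restrict to deterministic $\mathcal{A}$. Next, by the standard random-oracle unpredictability (lazy sampling) argument, the probability over $\mathsf{RO}$ that the algorithm ever produces the correct answer to some $\mathsf{RO}(i, x_{\ell_i}, r_i, 0^*)$ without having explicitly issued that query is at most $q R m / 2^n$, which is negligible in the parameter regime $m, R, q < 2^{n/4}$. Hence I may assume that every successful run literally issues each of the $w$ ``correct'' queries, and moreover discovers them in order $1, 2, \ldots, w$ along the line, since each $(\ell_{i+1}, r_{i+1})$ can only be obtained from the answer to the preceding correct query.

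The heart of the argument is a per-machine, per-round bound: for any machine $M$ and round $t$, the probability over $(\mathsf{RO}, x)$ that $M$ discovers a block of $k = \Theta(\log^2 w)$ consecutive new correct queries in round $t$ is super-polynomially small. To show this via the tweaked compression argument sketched in Section~\ref{result}, I would fix the state of all machines at the start of round $t$ and fix the oracle outside a chosen block of $k$ indices. For each candidate sequence $(\ell_i, \ldots, \ell_{i+k-1}) \in [v]^k$, I would rerun $M$ with the oracle adjusted accordingly and collect the queries it makes; a machine that succeeds on many such sequences must ``know'' enough of the corresponding inputs $x_{\ell_j}$ to betray a compression of $x$ conditioned on the local memory of size $s \leq S/c$. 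The savings per discovered node scale like $\log(v/(s/u)) = \log c = \Omega(1)$, so a block of size $k$ saves $\Omega(k)$ bits, while the encoding overhead (block location, machine identity, round index, truncated oracle entries) is only $\mathrm{poly}(n, \log w, \log m, \log R)$; picking $k = \Theta(\log^2 w)$ makes the savings dominate.

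Finally, summing the per-round, per-machine bound over the $R$ rounds and $m$ machines via a union bound, the total number of correctly discovered queries is at most $R \cdot m \cdot \log^2 w$ except on a negligible-probability event. Since a successful computation must discover all $w$ queries, this forces $R = \tilde\Omega(w/m)$, and absorbing the factor $m = 2^{O(n^{1/4})}$ into the parameter $k$ inside the compression bound (which is precisely why the lemma constrains $m$ to this regime and why we pay $\log^2 w$ instead of $\log w$) yields $R = \tilde\Omega(w) = \tilde\Omega(T)$. The hard part, I expect, is the design of the encoding in the compression step: because each $\ell_i$ is itself an output of $\mathsf{RO}$, we cannot condition on the $\ell$-sequence without entangling it with the oracle bits we are trying to compress. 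Resolving this circularity is exactly the role of the ``enumerate over $\ell$-sequences and run $M$ on each'' trick, and making sure that the resulting encoding is both injective and shorter than $|\mathsf{RO}| + |x|$ after accounting for all the side information handed to the decoder will be the main technical burden.
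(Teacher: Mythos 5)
Your overall strategy matches the paper's: reduce to deterministic $\mathcal{A}$, use a lazy-sampling ``no-jump'' lemma to force the correct queries to be discovered in order, and break the circularity between the stored inputs and the oracle-determined indices $\ell_i$ by enumerating all candidate sequences $(a_1,\ldots,a_{\log^2 w})\in[v]^{\log^2 w}$, rerunning the machine on the correspondingly modified oracles, and compressing $(\mathsf{RO},X)$ via the queries so obtained. The per-machine, per-round block bound you describe is exactly the paper's Lemma on $|B_i^{(k)}|$ combined with the $(h/v)^{\log^2 w}$ decay.

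However, your final accounting has a genuine gap. You bound the total number of discovered correct queries by $R\cdot m\cdot\log^2 w$, i.e., you treat the per-round progress as \emph{additive over machines}, which only yields $R=\tilde\Omega(w/m)$ --- vacuous in the allowed regime $m<2^{O(n^{1/4})}$, where $m$ can exceed $T$. Your proposed repair, ``absorbing $m$ into $k$,'' does not work: enlarging the block length $k$ only shrinks the per-block failure probability, but the progress bound $R\cdot m\cdot k\geq w$ then gives $R\geq w/(mk)$, which gets \emph{worse} as $k$ grows; no choice of $k$ removes the $1/m$ factor. The missing structural observation is that within a single round there is no communication between machines, and (by the no-jump lemma) the query at position $j+b+1$ requires knowing the answer $r'_{j+b}$ to the query at position $j+b$; hence an entire block of consecutive new correct queries must be issued by \emph{one} machine in that round. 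The frontier of the line therefore advances per round by the \emph{maximum}, not the sum, over machines of their individual progress. Concretely, the paper defines $C^{(k)}=\{(i,x_{\ell_i},r_i)\mid k\log^2 w<i\leq w\}$ and proves by induction that $\Pr\bigl[\abs{Q^{(\leq k)}\cap C^{(k+1)}}>0\bigr]$ is small: conditioned on no machine having passed position $k\log^2 w$ before round $k$, reaching past $(k+1)\log^2 w$ in round $k$ forces some single machine $i$ to satisfy $\abs{Q_i^{(k)}\cap C^{(k)}}>\log^2 w$, an event of probability at most $(h/v)^{\log^2 w}+2^{-\Omega(u)}$; the union bound over the $m$ machines enters only this \emph{failure probability} (where $m<2^{O(n^{1/4})}$ is harmless), not the progress. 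This is what yields $R\geq w/\log^2 w=\tilde\Omega(T)$ independently of $m$, and it is the point your write-up needs to add.
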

As discussed in Section~\ref{result}, intuitively, the only way to learn $(\ell_{w+1},r_{w+1},z_{w+1})$ is to make queries to learn the value of each node (i.e.  $(\ell_{w+1},r_{w+1},z_{w+1})$) one by one. However, since $s\leq S/c$ for some constant $c$, intuitively, a machine can only store a constant fraction of $x_i$'s, and since $\ell_i$'s are random, the probability that a machine can learn the value of $p$ new nodes should decay exponentially in $p$.
 To formalize the above intuition, however, there are three issues. {First, we need to argue that indeed the algorithm can only query on the $\Line^\mathsf{RO}$ in the given order. Second, in general, an MPC algorithm is not restricted to store $x_i$'s in the most naive way; instead, it may encode them arbitrarily. Third, an MPC algorithm can query random oracle arbitrarily and thus the set of $x_i$'s stored would correlate with random oracle (in particular, $\ell_i$) arbitrarily. The first issue can be solved readily by a standard argument. To resolve the second issue, we employ the compression argument which helps us extract the information about $x_i$'s from the queries. Note that although the use of compression argument is inspired by the analysis of MHFs, we stress that the models are quite different and we cannot rely on the analysis of MHFs. The reason is that in MPC model, the machines can make an arbitrary number of adaptive queries in one round, whereas the need of adaptive queries is the source of hardness for MHFs.}


Now we give an overview of our technique. 
Observe that since computing $\Line^\mathsf{RO}$ requires the MPC algorithm to query on the $\Line^\mathsf{RO}$ in the given order, the queries of a machine must contain the corresponding $x_{\ell_{i-1}}$ to get $\ell_i$ and $r_i$ and this leaks which $x_i$'s a local machine stores. Thus, those $x_i$'s appearing in the queries can effectively represent those stored in the local memory. This motivates us to consider the set $B$ of $x_i$'s appearing in the queries. By applying compression argument, we can bound the size of $B$.
However, to get a better bound, we need to exploit the fact that each $\ell_j$ is uniformly random and independent. The set $B$ discussed above seems unlikely for us to do so since it depends on the random oracle (in particular, $\ell_j$). 
To remove the dependency on $\ell_j$'s,  we enumerate all the oracles with different sequences of $\log^2 w$ consecutive $\ell_j$'s in the encoding scheme and run the machine on these oracles. As the set of  queries obtained this way no longer depends on the enumerated $\ell_j$'s, it allows us to argue the probability of querying the next $p=\log^2 w$ correct queries decays exponentially in $p$.

We first formalize that the MPC algorithm can only query on the $\Line^\mathsf{RO}$ one by one. Given an oracle $\mathsf{RO}$, for each correct entry $j$ on $\Line^\mathsf{RO}$, we consider a set $V^{(j)}$ of oracle entries defined as follows. Initially, let $V^{(j)}=\phi$ and add $(j+1, x_{\ell_{j+1}}, r_{j+1})$ to $V^{(j)}$. Then, for $a_0=\ell_{j+1}$, any $a_1,...,a_{\log^2 w}\in [v]^{\log^2 w}$ and $b$ from $1$ to $\log^2 w$, add $(j+b+1, x_{a_b},r_b')$ to $V^{(j)}$ where $(\ell_b',r_b',z_b')\coloneqq \mathsf{RO}(j+b,x_{a_{b-1}},r_{b-1}')$, and we say $(j+b,x_{a_{b-1}},r_{b-1}')$ is the previous entry of $(j+b+1, x_{a_b},r_b')$. Note that for any $j$, $\abs{V^{(j)}}\leq v^{\log^2w}$. 
\begin{lma}\label{nojump2}
For any deterministic massively parallel computation $\mathcal{A}$ with $m$ machines and the number of queries $q$ running until the end of round $k$, let $E^{(k)}$ be the event that there exists a query position $t\in [(k+1)mq]$ and an oracle entry $e\in \bigcup_j V^{(j)}$ such that given $\mathcal{A}$ haven't queried the previous entry $e'$ of $e$, $\mathcal{A}$ successfully queries $e$ before the end of round $k$. Then, we have
\[
    \Pr_{(\mathsf{RO},X)}\left[E^{(k)}\right]\leq wv^{\log^2w}(k+1)mq2^{-u},
\]
where $\mathsf{RO}$ and $X$ are uniformly distributed.
\end{lma}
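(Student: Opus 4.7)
The plan is to combine the lazy-evaluation view of the random oracle with a single union bound over pairs (target entry, query position). In the lazy view, each value $\mathsf{RO}(y)$ is sampled uniformly from $\{0,1\}^{n}$ only at the moment when $y$ is first queried. For any entry $e=(j+b+1,x_{a_b},r_b')\in V^{(j)}$, the $u$-bit block $r_b'$ is by construction the ``$r$-coordinate'' of the oracle answer to the previous entry $e'=(j+b,x_{a_{b-1}},r_{b-1}')$. Hence, conditional on $e'$ not yet having been queried, $r_b'$ is uniformly distributed in $\{0,1\}^{u}$ and independent of $\mathcal{A}$'s transcript so far, of the input $X$ (which is sampled independently of $\mathsf{RO}$), and of the shared tape. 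The base entries $(j+1,x_{\ell_{j+1}},r_{j+1})$ are handled in exactly the same way, with ``previous entry'' taken to be the true $j$-th line query $(j,x_{\ell_j},r_j)$.

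Next, I would fix a single entry $e\in\bigcup_j V^{(j)}$ and a single query position $t\in[(k+1)mq]$, and bound the probability of the event that $\mathcal{A}$'s $t$-th oracle query equals (the $n$-bit encoding of) $e$ while $e'$ has not been queried at any position before $t$. Because $\mathcal{A}$ is deterministic (by Remark~\ref{ran}), its $t$-th query is a function of the transcript through position $t-1$; by the independence property above, the probability that the $r$-component of this query matches the uniformly random $r_b'$ is at most $2^{-u}$, so the probability of the whole event is at most $2^{-u}$. A union bound over the $w$ choices of $j$, the at most $v^{\log^2 w}$ entries in each $V^{(j)}$, and the $(k+1)mq$ possible query positions then yields $\Pr[E^{(k)}]\leq w\, v^{\log^2 w}(k+1)mq\cdot 2^{-u}$, matching the claim.

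The main step that needs care is justifying the ``$r_b'$ is uniform given $e'$ has not been queried'' statement when $\mathcal{A}$ is fully adaptive. This is a standard lazy-sampling argument: one realises the oracle as an incremental process indexed by the sequence of queries actually made by $\mathcal{A}$, so the event ``the first query of $e'$ falls after position $t$'' is measurable in the transcript up to $t$, which is independent of the fresh uniform draw used for $\mathsf{RO}(e')$ when (and if) that query is eventually made. No additional subtlety arises from overlaps between different sets $V^{(j)}$, since the union bound is taken over the total count of (entry, query) pairs regardless of collision.
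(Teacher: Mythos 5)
Your proposal is correct and follows essentially the same argument as the paper: fix a target entry $e$ and a query position $t$, observe that the $u$-bit $r$-component of $e$ is the (yet unrevealed) answer to its previous entry $e'$ and hence uniform and independent of the deterministic algorithm's $t$-th query, giving a $2^{-u}$ guessing probability, and then union bound over the $w\cdot v^{\log^2 w}$ entries and $(k+1)mq$ positions. The paper phrases the independence step by conditioning on the set of oracles consistent with the transcript and the prefixed chain, while you use the equivalent lazy-sampling formulation; these are standard interchangeable presentations of the same argument.
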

\begin{proof}
Fix some $t$ and $e\in \bigcup_jV^{(j)}$. Suppose all the previous queries are $q_1,...,q_{t-1}$ and the next query is $q_t$. We first fix $q_1,...,q_{t-1}$ and all the previous correct entries of $e'$ according to the $a_1,...,a_{\log^2 w}$ that let us add $e$ to $V^{(j)}$.
Then, we consider the set of random oracles, $\mathsf{RO}'$, consistent with the answers to the queries $q_1,q_2,...,q_{t-1}$ and the pre-fixed correct entries. For these oracles, the oracle entry $e'$ is well-defined and the oracle answer to $e'$ is still uniform over the set $\mathsf{RO}'$. Since the answers to the queries are fixed and $\mathcal{A}$ only depends on the answers to its queries, the next query $q_t$ will be the same for any oracle in $\mathsf{RO}'$. Moreover, $r_z'$ is still uniform over all $2^u$ possible values. Hence, we conclude that the guessing probability will be less than $2^{-u}$.
Thus, by a union bound, we obtain the claimed lemma.
\end{proof}

 For each $k$, let $E^{(k)}$ be the event defined in Lemma~\ref{nojump2} and $\overline{E^{(k)}}$ be the negation of $E^{(k)}$. Given a random oracle $\mathsf{RO}:\{0,1\}^n\to\{0,1\}^n$, an input $X\in\{0,1\}^{uv}$, and a massively parallel computation $\mathcal{A}$, let $j_k$ be the largest index such that $(j_k,x_{\ell_{j_k}},r_{j_k})$ has been queried by $\mathcal{A}^\mathsf{RO}$ on input $X$ before the beginning of round $k$. 
\begin{defn}
Let $\mathcal{A}$ be some deterministic massively parallel computation. Let $X\in\{0,1\}^{uv}$ be some input, and $\mathsf{RO}:\{0,1\}^n\to\{0,1\}^n$ be some oracle s.t. running $\mathcal{A}^\mathsf{RO}$ on input $X$, $\overline{E^{(k)}}$ happens.
Then, for any $a_1,...,a_{\log^2 w}\in [v]^{\log^2 w}$, round $k$, let $\mathsf{RO}_{a_1,...,a_{\log^2 w}}^{(k)}$ be the oracle constructed by the following procedure. 
\end{defn}
\ifnum\submission=0
\begin{enumerate}
    \item $\mathsf{RO}_{a_1,...,a_{\log^2 w}}^{(k)}\leftarrow \mathsf{RO}$.
    \item Let $a_0=\ell_{j_k}$ and $r_{j_k}'=r_{j_k}$. For $t=1,...,\log^2 w$, 
    \begin{align*}
    \mathsf{RO}_{a_1,...,a_{\log^2 w}}^{(k)}&(j_k+t-1,x_{a_{t-1}},r_{j_k+t-1}',0^*)\leftarrow (a_t,r_{j_k+t}',z_{j_k+t}'),
    \end{align*}
    where $(\ell_{j_k+t}',r_{j_k+t}',z_{j_k+t}')\coloneqq \mathsf{RO}(j_k+t-1,x_{a_{t-1}},r_{j_k+t-1}',0^*)$.
\end{enumerate}
\else
\begin{enumerate}
    \item $\mathsf{RO}_{a_1,...,a_{\log^2 w}}^{(k)}\leftarrow \mathsf{RO}$.
    \item Let $a_0=\ell_{j_k}$ and $r_{j_k}'=r_{j_k}$. For $t=1,...,\log^2 w$, 
    \begin{align*}
    \mathsf{RO}_{a_1,...,a_{\log^2 w}}^{(k)}&(j_k+t-1,x_{a_{t-1}},r_{j_k+t-1}',0^*)\\
    \leftarrow &(a_t,r_{j_k+t}',z_{j_k+t}'),
    \end{align*}
    where $(\ell_{j_k+t}',r_{j_k+t}',z_{j_k+t}')\coloneqq \mathsf{RO}(j_k+t-1,x_{a_{t-1}},r_{j_k+t-1}',0^*)$.
\end{enumerate}
\fi
We note that given $(\mathsf{RO},X)$ s.t. $\overline{E^{(k)}}$ happens, all $\mathsf{RO}_{a_1,...,a_{\log^2 w}}^{(k)}$ have the same $j_k$ and thus $\mathsf{RO}_{a_1,...,a_{\log^2 w}}^{(k)}$ is well-defined. 
\begin{defn}\label{set}
Let $\mathcal{A}$ be some deterministic massively parallel computation with $m$ machines. Let $\mathsf{RO}:\{0,1\}^n\to\{0,1\}^n$ be some oracle and $X\in\{0,1\}^{uv}$ be an input s.t. running $\mathcal{A}^\mathsf{RO}$ on input $X$, $\overline{E^{(k)}}$ happens. Further, let $i\in [m]$ be the index of some machine in $\mathcal{A}$, and $k\geq 0$ be some integer. Let the set $B_i^{(k)}\subseteq [v]$ be s.t. $a\in B_i^{(k)}$ if there is a sequence $a_1,...,a_{\log^2 w}\in [v]^{\log^2 w}$ and $b\in [\log^2 w]$ such that $a_b=a$ and running $\mathcal{A}$ with oracle access to $\mathsf{RO}_{a_1,...,a_{\log^2 w}}^{(k)}$ on input $X$, machine $i$ queries $(j_k+b,x_{a},r_{j_k+b}',0^*)$ in round $k$. 
\end{defn}

Recall that at the beginning of a round, each machine receives a state of size $s$ which may depend on all the previous queries and the input $X=x_1,...,x_v$. Lemma~\ref{nojump2} helps us rule out the possibility that the given state depends on any element in $V^{(k)}$ before the beginning of round $k$, in which case our encoding scheme cannot work. We have the following lemma which helps us bound the size of $B_i^{(k)}$.
\begin{lma}\label{bound}
Let $k$ be an integer, $E^{(k)}$ be the event defined in Lemma~\ref{nojump2} and $\overline{E^{(k)}}$ be its negation. And suppose $u\geq (\log^2 w+2)\log v+\log q$. Let $\mathcal{A}$ be a deterministic massively parallel computation with $m$ machines, local memory of size $s$ and the number of queries $q$ computing $\Line_{n,w,u,v}$. Then, for any machine $i$, any round $k$, we have 
\[
    \Pr_{(\mathsf{RO},X)}\left[\abs{B_i^{(k)}}> h \land \overline{E^{(k)}}\right]\leq 2^{-(u-(\log^2 w+2)\log v-\log q)},
\]
where $h=\frac{s}{u-(\log^2 w+2)\log v-\log q}+1$ and $\mathsf{RO},X$ are uniformly distributed.
\end{lma}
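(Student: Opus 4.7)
The plan is a compression argument of standard form. Assume for contradiction that $\Pr_{(\mathsf{RO},X)}[\mathcal{B}] > 2^{-\Delta}$ for the event $\mathcal{B} := \{|B_i^{(k)}| > h\} \cap \overline{E^{(k)}}$ and $\Delta := u - (\log^2 w + 2)\log v - \log q$. Since $(\mathsf{RO}, X)$ is uniform over $\{0,1\}^{n\cdot 2^n + uv}$, such a density would mean the event has more than $2^{n \cdot 2^n + uv - \Delta}$ realizations, and therefore cannot be injectively encoded using strictly fewer bits. I will exhibit exactly such an encoding on $\mathcal{B}$, yielding the desired contradiction.

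For the encoder, given $(\mathsf{RO}, X) \in \mathcal{B}$, I would output: (i) $\mathsf{RO}$ verbatim; (ii) the state $\sigma \in \{0,1\}^s$ of machine $i$ at the start of round $k$; (iii) the set $B := B_i^{(k)}$ as a list of at most $v$ indices (costing $|B|\log v + O(\log v)$ bits); (iv) for each $a$ in this list, a witness tuple $(a_1, \ldots, a_{\log^2 w}, b, t)$ with $a_b = a$ such that simulating machine $i$ from $\sigma$ under the modified oracle $\mathsf{RO}^{(k)}_{a_1,\ldots,a_{\log^2 w}}$, the $t$-th oracle query made by machine $i$ equals $(j_k + b, x_a, r'_{j_k+b}, 0^*)$ — costing $\log^2 w \cdot \log v + O(\log \log w) + \log q$ bits per $a$; and (v) the restriction $X|_{[v]\setminus B}$, of size $(v - |B|)u$. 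The decoder rebuilds $\mathsf{RO}$ and $X|_{[v]\setminus B}$ trivially, then for each $a \in B$ reads the witness tuple, assembles the modified oracle, runs machine $i$ from $\sigma$ under it, locates the $t$-th query, and reads off $x_a$ from the second component. The event $\overline{E^{(k)}}$ is crucial here: it guarantees that the state $\sigma$ is independent of which modified oracle we plug in, because all modified entries lie in the ``future'' set $V^{(j_k)}$ and have not been touched by the machine before round $k$.

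A direct bit count gives total encoding length at most $n \cdot 2^n + s + (v - |B|)u + |B| \cdot \bigl((\log^2 w + 1)\log v + \log q + O(\log \log w)\bigr) + O(\log v)$, so the savings over the trivial $n \cdot 2^n + uv$ are at least $|B| \cdot \bigl(u - (\log^2 w + 1)\log v - \log q - O(\log \log w)\bigr) - s - O(\log v)$. Plugging in $|B| > h = s/\mu + 1$ with $\mu = u - (\log^2 w + 2)\log v - \log q$, and absorbing the $O(\log \log w)$ and $O(\log v)$ terms into the $\log v$ slack between $(\log^2 w + 1)\log v$ and $(\log^2 w + 2)\log v$, the savings exceed $\Delta$, yielding the contradiction and hence the claimed bound.

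The main obstacle I anticipate is rigorously handling the decoder's simulation step, because constructing $\mathsf{RO}^{(k)}_{a_1,\ldots,a_{\log^2 w}}$ requires knowing $x_{a_{s-1}}$ for $s \in [\log^2 w]$ in order to identify the modified oracle positions, yet some $a_{s-1}$ may themselves lie in $B$ and therefore be undecoded at the moment they are needed. I would resolve this via one of three routes: (a) choose the witness sequence for each $a \in B$ so that all other indices $a_s$, $s \neq b$, lie in $[v] \setminus B$, and show such a witness always exists in the lemma's parameter regime by a pigeonhole argument over the $v^{\log^2 w}$ sequences defining $B$; (b) impose a canonical topological decoding order on $B$ induced by the witness dependency graph and decode iteratively; or (c) pay a small additional advice cost to supply the few missing $x$-values, absorbing the overhead into the bit-count slack. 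Verifying that one of these works under the assumption $u \geq (\log^2 w + 2)\log v + \log q$ is the principal technical step.
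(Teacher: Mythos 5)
Your skeleton is the same as the paper's: a compression argument that stores $\mathsf{RO}$ verbatim, the $s$-bit state of machine $i$, per-element witnesses of cost roughly $(\log^2 w+1)\log v+\log q$, and the un-recovered part of $X$, then invokes the $\log\abs{F}-1$ lower bound on injective variable-length encodings. The bit accounting also matches. However, the step you yourself flag as ``the principal technical step'' --- making the decoder well-founded when the modified oracle entries depend on $x$-values that are themselves being decoded --- is left genuinely unresolved, and two of your three candidate fixes do not work. Route (a) is unjustified: $B_i^{(k)}$ only guarantees \emph{some} witness sequence per element, and nothing forces the existence of a witness whose other coordinates avoid $B_i^{(k)}$; indeed the machine's decision to query chain position $b$ may be driven precisely by the answers at positions $1,\dots,b-1$, which can involve arbitrary indices. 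Route (c) is fatal to the bit count: each ``missing'' $x$-value costs $u$ bits of advice, exactly the quantity you are trying to save, and there is no a priori bound on how many are missing.

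The paper's resolution is closest to your route (b), but it works for a reason you did not supply. The encoder enumerates \emph{all} $v^{\log^2 w}$ sequences in a fixed canonical order, runs $\mathcal{A}_2$ on each modified oracle, and for each sequence records the \emph{positions in the query transcript} of the queries revealing not-yet-recorded indices; the decoder replays the sequences in the same order. Because the position of each revealing query is recorded, the decoder reads $x_{a_t}$ \emph{off the query string itself} at that position --- it never needs to know $x_{a_t}$ in advance to recognize the query. What it does need is to answer earlier chain queries with the modified answers, and here the conditioning on $\overline{E^{(k)}}$ (via the sets $V^{(j)}$, which contain all modified-chain entries over all sequences) guarantees the chain entries are queried in order; hence by the time chain entry $t$ is queried, entry $t-1$ has already been identified (either via a recorded position or via a previously decoded index plus $r'_{j_k+t-1}$ computed from $\mathsf{RO}$), so the modified answer $(a_t,r'_{j_k+t},z'_{j_k+t})$ can be supplied. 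Note also that $\overline{E^{(k)}}$ is used a second time, to ensure $\mathcal{A}_1$ never touches $V^{(j_k)}$ so that the state $M$ and the index $j_k$ are identical across all modified oracles --- you mention this, correctly. Without the recorded-position mechanism and the in-order guarantee, the induction underlying your route (b) has no base and the dependency graph could be cyclic, so as written the proposal has a gap at exactly the point it identifies.
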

\begin{proof}
We show that the fraction of $(\mathsf{RO}, X)$ s.t. $\abs{B_i^{(k)}}> h$ and $E^{(k)}$ does not happen cannot be too large. We first construct an encoding scheme that ``compresses'' all the $(\mathsf{RO},X)$ such that $\abs{B_i^{(k)}}> h$ and $E^{(k)}$ does not happen. In the following, we consider $\mathcal{A}$ running until the end of round $k$. 
\begin{cla}\label{enc2}
If
\[
    \Pr_{(\mathsf{RO},X)}\left[\abs{B_i^{(k)}}> h\land \overline{E^{(k)}}\right]=\epsilon,
\]
then there is a set $F$ of $(\mathsf{RO},X)$ such that $\abs{F}\geq \epsilon 2^{n2^n+uv}$ and a deterministic encoding scheme $(\Enc,\Dec)$ such that for any $(\mathsf{RO},X)\in F$, both of the following hold
\begin{enumerate}
    \item $\Dec(\Enc(\mathsf{RO},X))=(\mathsf{RO},X)$
    \item $\abs{\Enc(\mathsf{RO},X)}\leq s+h((\log^2 w+2)\log v+\log q)+(v-h)u+n2^n$.
\end{enumerate}
\end{cla}
\begin{proof}
We consider all the computation done by $\mathcal{A}$ before the beginning of round $k$ as $\mathcal{A}_1$ and the output of $\mathcal{A}_1$ is the memory state given to machine $i$ as input at the beginning of round $k$. We also consider the computation done by machine $i$ in round $k$ as $\mathcal{A}_2$ and it outputs the set of queries, and the corresponding answer set. Since for each $(\mathsf{RO},X)\in F$, when running on $(\mathsf{RO},X)$, $E^{(k)}$ does not happen, we assure that $\mathcal{A}_1$ never queries any element in $V^{(k)}$ and hence, $\mathcal{A}_2$ can not tell whether we replace $\mathrm{RO}$ with $\mathrm{RO}_{a_1,...,a_{\log^2 w}}$.

Now we are able to describe our encoding scheme.

$\Enc(\mathsf{RO},X):$
\begin{enumerate}
    \item Add the entire $\mathsf{RO}$ to our encoding.
    \item Run $\mathcal{A}_1(X)$ with oracle access to $\mathsf{RO}$. Denote its output as $M$ and add $M$ to our encoding. Note that $|M|=s$.
    \item For any $a_1,...,a_{\log^2 w}\in [v]^{\log^2 w}$, run $\mathcal{A}_2(M)$ with oracle access to $\mathsf{RO}_{a_1,...,a_{\log^2 w}}$. This can be done by examining the queries of $\mathcal{A}_2$ and providing a revised answer $(a_t,r_{j_k+t}',z_{j_k+t}')$ to $\mathcal{A}_2$ if it makes a corresponding query.
    \begin{itemize}
        \item Let $q_t=(j_k+t,x_{a_t},r_{j_k+t}',0^*)$ for every $t\in [\log^2 w]$. On $a_1,...,a_{\log^2 w}$, denote $Q_{a_1,...,a_{\log^2 w}}$ as the set of $q_t$ such that $q_t$ is queried by $\mathcal{A}_2$ and the corresponding $a_t$ hasn't been recorded before. If $Q_{a_1,...,a_{\log^2 w}}\neq \phi$, add $a_1,...,a_{\log^2 w}$, $\abs{Q_{a_1,...,a_{\log^2 w}}}$, the index of each \\$q_t\in Q_{a_1,....,a_{\log^2 w}}$ and the corresponding $a_t$ of all the $q_t \in Q_{a_1,...,a_{\log^2 w}}$ to the encoding.
    \end{itemize}
    \item For each $x\in X$ that is not contained in any query of $\mathcal{A}_2$ in the above process, add $x$ to our encoding. Denote it as $X'$.
\end{enumerate}
Denote the entire encoding as $msg$.
Note that in the third step, by our construction, the union of $Q_{a_1,...,a_{\log^2 w}}$ is exactly $B_i^{(k)}$. Therefore, the size of encoding added in this step is at most 
\[
    |B_i^{(k)}|((\log^2w+1)\log v+\log q+\log |B_i^{(k)}|).
\]
    Thus, given that $v\geq \abs{B_i^{(k)}}>h$ and $u\geq (\log^2 w+2)\log v+\log q$, the total size of encoding is at most
\[
    s+h((\log^2 w+2)\log v+\log q)+(v-h)u+n2^n.
\]
$\Dec(msg):$
\begin{enumerate}
    \item Construct $\mathsf{RO}$ from the first part of $msg$.
    \item For each $a_1,...,a_{\log^2 w}$ in the $msg$, run $\mathcal{A}_2(M)$ with oracle access to $\mathsf{RO}_{a_1,...,a_{\log^2 w}}$.
    \begin{itemize}
        \item Read $\abs{Q_{a_1,...,a_{\log^2 w}}}$ to see how many $x_i$ to recover using the current $a_1,...a_{\log^2 w}$.
        \item Read each index of query $q_t$ and corresponding $a_t$ in the $msg$ and find the corresponding query from the queries of $\mathcal{A}_2$ to recover $x_{a_t}$.
    \end{itemize}
    \item The remaining of $X$ can be reconstructed using $X'$.
\end{enumerate}

The correctness follows clearly.
\end{proof}
The following claim shows the information-theoretic limit of any deterministic encoding scheme with perfect correctness.
\begin{cla}\label{limit2}
For any deterministic encoding scheme $(\Enc,\Dec)$ such that $\forall m\in M$, $\Dec(\Enc(m))=m$, we have
\[
     \max_{m} \abs{\Enc(m)}\geq \log \abs{M}-1.
\]
\end{cla}
\begin{proof}
Suppose $\max_{m}\abs{\Enc(m)}=t$. Then the number of possible codewords is 
\[
    \sum_{i=0}^t 2^{t-i}\leq 2^{t+1}.
\]
To have a one-to-one mapping, we have 
$
    2^{t+1}\geq \abs{M}$, and thus $t\geq \log \abs{M} -1$.
\end{proof}
Suppose 
\[
    \Pr_{(\mathsf{RO},X)}\left[\abs{B_i^{(k)}}> h\land \overline{E^{(k)}}\right]=\epsilon.
\]
Then, by Claim~\ref{enc2}, there is a set $F$ of $(\mathsf{RO},X)$ such that $\abs{F}\geq \epsilon 2^{n2^n+uv}$ and a deterministic encoding scheme $(\Enc,\Dec)$ such that 

\ifnum\submission=0
\begin{equation}\label{upbound}
    \abs{\Enc(\mathsf{RO},X)}\leq s+h((\log^2 w+2)\log v+\log q)+(v-h)u+n2^n
\end{equation}

\else

\begin{align}\label{upbound}
\begin{split}
    \abs{\Enc(\mathsf{RO},X)}
    \leq s+h((\log^2 w&+2)\log v+\log q)+(v-h)u+n2^n
\end{split}
\end{align}
\fi
and $\Dec(\Enc(\mathsf{RO},X))=(\mathsf{RO},X)$ for any $(\mathsf{RO},X)\in F$. 
On the other hand, by Claim~\ref{limit2}, for any deterministic encoding scheme $(\Enc', \Dec')$ such that 
\[
    \Dec'(\Enc'(m))=m, \forall m\in F,
\]
we have
\begin{equation}\label{lowbound}
    \max_m \abs{\Enc'(m)}\geq \log \abs{F}-1\geq n2^n+uv+\log \epsilon-1.
\end{equation}
Combining Equation \ref{upbound}, Equation \ref{lowbound} and that \[h=\frac{s}{u-(\log^2 w+2)\log v-\log q}+1,\] the lemma follows.
\end{proof}
Now we are in a position to prove Lemma~\ref{lmaline}. For each round $k\geq 0$, let \[
    C^{(k)}=\{(i,x_{\ell_i},r_i)\mid k\log^2 w< i\leq w\}.
\] For each round $k$, let $Q^{(\leq k)}$ be the set of queries done by all machines until the end of round $k$, $Q^{(k)}$ be the set of queries done by all machines in round $k$, and $Q^{(k)}_i$ be the set of queries done by machine $i$ in round $k$.
\begin{proof}[Proof of Lemma~\ref{lmaline}]
We prove Lemma~\ref{lmaline} by showing the following claim.
\begin{cla}
For any deterministic massively parallel computation with $m$ machines, local memory of size $s$ and the number of queries $q$ computing $\Line_{n,w,u,v}$ and running until the end of round $0\leq k<\frac{w}{\log^2 w}-1$,
\ifnum\submission=0
\begin{align*}
    \Pr_{(\mathsf{RO},X)}\left[\abs{Q^{(\leq k)}\cap C^{(k+1)}}>0\right]
    \leq (k+1)m\left(\left(\frac{h}{v}\right)^{\log^2 w}+wv^{\log^2w}q2^{-u}+2^{-(n-(\log^2 w+2)\log v-\log q)}\right),
\end{align*}

\else
{\small
\begin{align*}
    &\Pr_{(\mathsf{RO},X)}\left[\abs{Q^{(\leq k)}\cap C^{(k+1)}}>0\right]\\
    &\leq (k+1)\cdot m \cdot \left(\left(\frac{h}{v}\right)^{\log^2 w}+wqv^{\log^2w}2^{-u}+2^{-(u-(\log^2 w+2)\log v-\log q)}\right)
\end{align*}
}
\fi
, where $h=\frac{s}{u-(\log^2 w+2)\log v-\log q}+1$.
\end{cla}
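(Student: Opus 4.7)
The plan is to bound the probability of the bad event by a union bound over the $(k{+}1)m$ (round, machine) pairs, combining Lemma~\ref{nojump2} (no ``skipping'' along the line), Lemma~\ref{bound} (size bound on $B_i^{(k')}$), and a direct probabilistic argument delivering the exponential-decay factor $(h/v)^{\log^2 w}$.

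First I would reduce the bad event to per-round progress. Conditioned on $\overline{E^{(k)}}$, no machine queries any element of $\bigcup_j V^{(j)}$ before its predecessor, so real-line queries are always made in order, and in each round $k'$ the increase $j_{k'+1}-j_{k'}$ is exactly the longest consecutive run of new correct line-queries made by some single machine. Thus if $|Q^{(\le k)} \cap C^{(k+1)}| > 0$, i.e.\ $j_{k+1} > (k{+}1)\log^2 w$, the pigeonhole principle forces some round $k' \le k$ and some machine $i$ to make at least $\log^2 w$ new correct line-queries in round $k'$; call this event $P_i^{(k')}$. A union bound then gives
\[
\Pr\bigl[|Q^{(\le k)} \cap C^{(k+1)}| > 0\bigr] \;\le\; \Pr[E^{(k)}] + \sum_{k'=0}^{k}\sum_{i=1}^{m} \Pr\bigl[P_i^{(k')} \cap \overline{E^{(k')}}\bigr],
\]
and Lemma~\ref{nojump2} supplies $\Pr[E^{(k)}] \le (k{+}1)m\,wv^{\log^2 w}q\cdot 2^{-u}$, accounting for the middle term of the claim after factoring out $(k{+}1)m$.

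For each pair $(i,k')$ I would further split according to whether $|B_i^{(k')}| \le h$. The ``bad'' branch $|B_i^{(k')}| > h \wedge \overline{E^{(k')}}$ is bounded by Lemma~\ref{bound} by $2^{-(u-(\log^2 w+2)\log v-\log q)}$, which sums to the third summand in the claim. The remaining task is to show that, conditioned on $\overline{E^{(k')}}$ and $|B_i^{(k')}| \le h$, one has $\Pr[P_i^{(k')}] \le (h/v)^{\log^2 w}$. The idea: under $\overline{E^{(k')}}$, the forward real-line entries $E_b \coloneqq (j_{k'}+b,\, x_{\ell_{j_{k'}+b}},\, r_{j_{k'}+b},\, 0^*)$ for $b = 1, \ldots, \log^2 w$ have not been queried before round $k'$, so by lazy evaluation their oracle responses are uniform and independent, making the sequence $(\ell_{j_{k'}+1}, \ldots, \ell_{j_{k'}+\log^2 w})$ uniform in $[v]^{\log^2 w}$. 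Because $\mathsf{RO}^{(k')}_{\ell_{j_{k'}+1}, \ldots, \ell_{j_{k'}+\log^2 w}}$ agrees with $\mathsf{RO}$ on every input, machine $i$'s real-oracle behavior coincides with its behavior on this specific modified oracle, and so $P_i^{(k')}$ is the event that the random path $(\ell_{j_{k'}+1}, \ldots, \ell_{j_{k'}+\log^2 w})$ is ``successful'' in the sense that running $\mathcal{A}$ with oracle access to the corresponding modified oracle forces machine $i$ to query all $\log^2 w$ overridden entries in round $k'$. By the very definition of $B_i^{(k')}$ in Definition~\ref{set}, every coordinate of a successful path lies in $B_i^{(k')}$, so there are at most $|B_i^{(k')}|^{\log^2 w} \le h^{\log^2 w}$ successful paths, and a union bound over them gives $(h/v)^{\log^2 w}$.

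The main obstacle is making the last union bound rigorous: a priori both the set of successful paths and $(\ell_{j_{k'}+b})_b$ are functions of the same $\mathsf{RO}$ and could be correlated, since the $r_{j_{k'}+t}'$ appearing in Definition~\ref{set} reads the real oracle at $E_{t-1}$ whenever the enumerated $a_{t-1}$ happens to match $\ell_{j_{k'}+t-1}$. I would resolve this by a careful lazy-evaluation coupling in which the oracle values at all entries outside $\{E_b\}_{b=1}^{\log^2 w}$ are revealed first, freezing the machines' joint state at the start of round $k'$, the event $\overline{E^{(k')}}$, and the bulk of each modified-oracle simulation; the $E_b$'s are then sampled afresh, producing $(\ell_{j_{k'}+b})_b$ uniform in $[v]^{\log^2 w}$ and independent of the ``structural'' part of the successful-path set, while the Lemma~\ref{bound} bound $|B_i^{(k')}| \le h$ persists and caps the number of successful paths by $h^{\log^2 w}$. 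Assembling the three contributions and multiplying out $(k{+}1)m$ yields the stated inequality.
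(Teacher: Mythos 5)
Your proposal is correct and follows essentially the same route as the paper: the paper organizes the argument as an induction on the round index $k$, but each inductive step is exactly your pigeonhole/union-bound decomposition into the three error terms — $\Pr[E^{(k)}]$ via Lemma~\ref{nojump2}, the $\abs{B_i^{(k')}}>h$ branch via Lemma~\ref{bound}, and the $(h/v)^{\log^2 w}$ counting of successful $\ell$-sequences inside $B_i^{(k')}$. Your explicit lazy-evaluation coupling for the last step is, if anything, more careful than the paper's one-line justification of the same independence claim.
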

\begin{proof}
We prove this by induction. 
For the base case, $k=0$, we have, for any machine $i$,
\ifnum\submission=0
{\small
\begin{align*}
   \Pr_{(\mathsf{RO},X)}\left[\abs{Q^{(0)}_i\cap C^{(1)}}>0\land \overline{E^{(0)}}\right]
    &\leq \Pr\left[\abs{Q^{(0)}_i\cap C^{(0)}}>\log^2 w\land \abs{B_i^{(0)}}\leq h\land \overline{E^{(0)}}\right]+\Pr\left[\abs{B_i^{(0)}}> h\land \overline{E^{(0)}}\right]\\
    &\leq \Pr\left[\abs{Q^{(0)}_i\cap C^{(0)}}>\log^2 w\land \abs{B_i^{(0)}}\leq h \middle |  \overline{E^{(0)}}\right]+\Pr\left[\abs{B_i^{(0)}}> h\land \overline{E^{(0)}}\right]\\
    &\leq \left(\frac{h}{v}\right)^{\log^2 w}+2^{-(u-(\log^2 w+2)\log v-\log q)},
\end{align*}
}
\else
{\small
\begin{align*}
   &\Pr_{(\mathsf{RO},X)}\left[\abs{Q^{(0)}_i\cap C^{(1)}}>0\land \overline{E^{(0)}}\right]\\
    &\leq \Pr\left[\abs{Q^{(0)}_i\cap C^{(0)}}>\log^2 w\land \abs{B_i^{(0)}}\leq h\land \overline{E^{(0)}}\right]+\Pr\left[\abs{B_i^{(0)}}> h\land \overline{E^{(0)}}\right]\\
    &\leq \Pr\left[\abs{B_i^{(0)}}> h\land \overline{E^{(0)}}\right]+\Pr\left[\abs{Q^{(0)}_i\cap C^{(0)}}>\log^2 w\land \abs{B_i^{(0)}}\leq h \middle | \overline{E^{(0)}}\right]\\
    &\leq \left(\frac{h}{v}\right)^{\log^2 w}+2^{-(u-(\log^2 w+2)\log v-\log q)}
\end{align*}
}\fi, where the last inequality follows from Lemma~\ref{bound} and the fact that $\abs{B_i^{(0)}}$ is the number of $x$ s.t. machine $i$ is able to output in round $0$, given any possible sequences of $\log^2 w$ consecutive $\ell$'s. \\
Thus, by a union bound, we obtain
\ifnum\submission=0
\begin{align*}
    \Pr_{(\mathsf{RO},X)}\left[\abs{Q^{(\leq 0)}\cap C^{(1)}}>0\right]
    &\leq m \Pr\left[\abs{Q^{(0)}_i\cap C^{(1)}}>0\land \overline{E^{(0)}}\right]+\Pr\left[E^{(0)}\right]\\
    &\leq m\left(\left(\frac{h}{v}\right)^{\log^2 w}+wv^{\log^2w}q2^{-u}+2^{-(u-(\log^2 w+2)\log v-\log q)}\right).
\end{align*}
\else
{\small
\begin{align*}
    &\Pr_{(\mathsf{RO},X)}\left[\abs{Q^{(\leq 0)}\cap C^{(1)}}>0\right]\\
    &\leq m \Pr\left[\abs{Q^{(0)}_i\cap C^{(1)}}>0\land \overline{E^{(0)}}\right] +\Pr\left[E^{(0)}\right]\\
    &\leq m\left(\left(\frac{h}{v}\right)^{\log^2 w} + wv^{\log^2w}q2^{-u}+2^{-(u-(\log^2 w+2)\log v-\log q)}\right).
\end{align*}
}
\fi

Now assume that for round $k-1$, the claim holds. Hence, we have
\ifnum\submission=0
{\small
\begin{equation}\label{equ:last_equ_1}
\begin{split}
    \Pr\left[\abs{Q^{(\leq k)}\cap C^{(k+1)}}>0\right]
    \leq &\Pr\left[\abs{Q^{(\leq k-1)}\cap C^{(k)}}>0\right]
    +\Pr\left[\abs{Q^{(\leq k)}\cap C^{(k+1)}}>0 \land \abs{Q^{(\leq k-1)}\cap C^{(k)}}=0\right]\\
    \leq & \Pr\left[\abs{Q^{(\leq k-1)}\cap C^{(k)}}>0\right]\\
         & +\Pr\left[\abs{Q^{(k)}\cap C^{(k+1)}}>0\land \abs{Q^{(\leq k-1)}\cap C^{(k)}}=0\land \overline{E^{(k)}}\right]
         \\ & +  \Pr\left[E^{(k)}\right]\\
\end{split}\end{equation}
Notice that 
\begin{equation}\label{equ:last_equ_2}\begin{split}
    & \Pr\left[\abs{Q^{(k)}\cap C^{(k+1)}}>0\land \abs{Q^{(\leq k-1)}\cap C^{(k)}}=0\land \overline{E^{(k)}}\right]\\
    \leq&  m \Pr\left[\abs{Q_i^{(k)}\cap C^{(k+1)}}>0\land \abs{Q^{(\leq k-1)}\cap C^{(k)}}=0\land \overline{E^{(k)}}\right]\\
    \leq & m \Pr\left[\abs{Q_i^{(k)}\cap C^{(k)}}>\log^2 w \land \abs{Q^{(\leq k-1)}\cap C^{(k)}}=0 \land \abs{B_i^{(k)}}\leq h\middle | \overline{E^{(k)}}\right]\\
    & +  m\cdot  \Pr\left[\abs{B_i^{(k)}}> h\land \overline{E^{(k)}}\right]\\
     \leq & m\cdot \left(\left(\frac{h}{v}\right)^{\log^2 w}+2^{-(u-(\log^2 w+2)\log v-\log q)}\right)
\end{split}\end{equation}
}
\else
{\small
\begin{equation}\label{equ:last_equ_1}
\begin{split}
    & \Pr\left[\abs{Q^{(\leq k)}\cap C^{(k+1)}}>0\right]\\
    \leq &\Pr\left[\abs{Q^{(\leq k-1)}\cap C^{(k)}}>0\right] \\ 
    & +\Pr\left[\abs{Q^{(\leq k)}\cap C^{(k+1)}}>0 \land \abs{Q^{(\leq k-1)}\cap C^{(k)}}=0\right]\\
    \leq & \Pr\left[\abs{Q^{(\leq k-1)}\cap C^{(k)}}>0\right]\\
         & +\Pr\left[\abs{Q^{(k)}\cap C^{(k+1)}}>0\land \abs{Q^{(\leq k-1)}\cap C^{(k)}}=0\land \overline{E^{(k)}}\right]
         \\ & +  \Pr\left[E^{(k)}\right]\\
\end{split}\end{equation}
Notice that 
\begin{equation}\label{equ:last_equ_2}\begin{split}
    & \Pr\left[\abs{Q^{(k)}\cap C^{(k+1)}}>0\land \abs{Q^{(\leq k-1)}\cap C^{(k)}}=0\land \overline{E^{(k)}}\right]\\
    \leq&  m \Pr\left[\abs{Q_i^{(k)}\cap C^{(k+1)}}>0\land \abs{Q^{(\leq k-1)}\cap C^{(k)}}=0\land \overline{E^{(k)}}\right]\\
    \leq & m \Pr\left[\abs{Q_i^{(k)}\cap C^{(k)}}>\log^2 w \land \abs{Q^{(\leq k-1)}\cap C^{(k)}}=0 \land \abs{B_i^{(k)}}\leq h\middle | \overline{E^{(k)}}\right]\\
    & +  m\cdot  \Pr\left[\abs{B_i^{(k)}}> h\land \overline{E^{(k)}}\right]\\
     \leq & m\cdot \left(\left(\frac{h}{v}\right)^{\log^2 w}+2^{-(u-(\log^2 w+2)\log v-\log q)}\right)
\end{split}\end{equation}
}\fi
, where the last inequality holds since we think of it as first fixing all the oracle answers on the queries before the beginning of the $k$th round and the remaining part of the oracle remains uniformly random. Now all the $\mathsf{RO}_{a_1,...,a_{\log^2 w}}^{(k)}$ and hence $B_i^{(k)}$ are well-defined and we are able to apply Lemma~\ref{bound}. The second probability bound follows from the definition of $|B_i^{(k)}|$ and the fact that oracle answers in $C^k$ remains uniformly random.

By Equation~\ref{equ:last_equ_1}, Equation~\ref{equ:last_equ_2}, and the inductive hypothesis, we have 
\[\begin{split}
 \Pr\left[\abs{Q^{(\leq k)}\cap C^{(k+1)}}>0\right]
\leq & (k+1)m\left(\left(\frac{h}{v}\right)^{\log^2 w}+wqv^{\log^2w}2^{-u} +2^{-(u-(\log^2 w+2)\log v-\log q)}\right)
\end{split}\]
\end{proof}
Let $\textbf{Success}$ be the event that $\mathcal{A}$ successfully compute $\Line_{n,w,u,v}^{\mathsf{RO}}$ in $\frac{w}{\log^2 w}$ round. To compute $\Line_{n,w,u,v}^\mathsf{RO}$, the algorithm must reach $(w,x_{\ell_{w}},r_{w})$. However, $(w,x_{\ell_{w}},r_{w})\in C^{(\frac{w}{\log^2 w}-1)}$ and hence, by our claim,
\ifnum\submission=0
\begin{align*}
    \Pr_{(\mathsf{RO},X)}[\textbf{Success}]
    &\leq \Pr_{(\mathsf{RO},X)}\left[\left|Q^{(\leq \frac{w}{\log^2 w}-2)}\cap C^{(\frac{w}{\log^2 w}-1)}\right|>0\right]\\
    &\leq \frac{w}{\log^2 w}m\left(\left(\frac{h}{v}\right)^{\log^2 w}+v^{\log^2w}q2^{-u}+2^{-(u-(\log^2 w+2)\log v-\log q)}\right).
\end{align*}
\else
{\small
\begin{align*}
    &\Pr_{(\mathsf{RO},X)}[\textbf{Success}]\\
    &\leq \Pr_{(\mathsf{RO},X)}\left[\left|Q^{(\leq \frac{w}{\log^2 w}-2)}\cap C^{(\frac{w}{\log^2 w}-1)}\right|>0\right]\\
    &\leq \frac{mw}{\log^2 w}\cdot
    \left(\left(\frac{h}{v}\right)^{\log^2 w}+v^{\log^2w}q2^{-u}+2^{-(u-(\log^2 w+2)\log v-\log q)}\right).
\end{align*}
}
\fi
As $n$ becomes sufficiently large, by our parameters setting, the success probability becomes sufficiently small.
\end{proof}

\bibliographystyle{plain}
\bibliography{ref}

\appendix
\section*{Appendix}
\section{A Warm-up Result}
\label{warmup}
In this section, we present a simpler construction to give some intuition on how our argument works.
We first state our theorem in random oracle model. 

\begin{thm}\label{simline}
There exists a universal constant $c>1$ such that for any sufficiently large $n>0$, let $\mathsf{RO}:\{0,1\}^n\to\{0,1\}^n$ be a random oracle, and running time $S\leq T<2^{O(n)}$, there is an oracle function $f:\{0,1\}^S\to \{0,1\}^n$ such that it can be computed in time $O(T\cdot n)$ using memory size $O(S)$ by a RAM algorithm in random oracle model. On the other hand, let $\mathcal{A}^\mathsf{RO}$ be a randomized massively parallel computation with $m<2^{O(n)}$ machines, local memory of size $s\leq S/c$ and the number of local queries $q<2^{O(n)}$ to random oracle per round. Then, in random oracle model, $\mathcal{A}^\mathsf{RO}$ needs at least $R\geq \Omega(\frac{T}{s})$ rounds to compute the function in average case. 
\end{thm}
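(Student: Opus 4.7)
The hard function is $\SimLine_{n,T,u,v}^{\mathsf{RO}}$ from Section~\ref{result}, instantiated with $u = \Theta(n)$ large enough that $\log v + \log q + O(1) < u/2$, and $v = \lceil S/u\rceil$. The input has length $uv = \Theta(S)$, and the RAM upper bound is immediate: walk down the line keeping only the current state $r_i$ alongside the whole input, for total time $O(Tn)$ and space $O(S)$. The MPC lower bound follows a simplified version of the two-step template later used for $\Line^{\mathsf{RO}}$: a no-jump lemma forcing sequential line queries, then a compression-based per-round progress bound.

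\medskip

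\emph{Step 1 (No-jump).} The plan is to show that with probability $1-o(1)$ over $(\mathsf{RO},x)$, every correct line query $(x_{j \bmod v}, r_j, 0^*)$ the algorithm makes is preceded by the query at position $j{-}1$. The argument mirrors Lemma~\ref{nojump2}, but is simpler because here the next query is a deterministic function of $r_{j-1}$, so no enumeration over $a_1,\dots,a_{\log^2 w}$ is needed: before querying position $j{-}1$, the value $r_j$ is uniform on $\{0,1\}^u$ and independent of the algorithm's view, so any particular ``jump'' to position $j$ succeeds with probability at most $2^{-u}$. A union bound over $T$ line positions, $m$ machines, $R$ rounds and $q$ queries gives failure probability $TmRq \cdot 2^{-u} = o(1)$ in the parameter regime of the theorem.

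\medskip

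\emph{Step 2 (Per-round compression).} The core claim is that for every round $k$ and machine $i$, the number $p$ of new consecutive line entries queried by machine $i$ in round $k$ exceeds $h = s/(u - O(\log(vq))) + 1$ with probability at most $2^{-\Omega(u)}$. Assuming this event occurs with probability $\varepsilon$, I would encode $(\mathsf{RO}, x)$ by storing (i) the full table of $\mathsf{RO}$, using $n2^n$ bits; (ii) the local input $M_i^{(k)} \in \{0,1\}^s$ of machine $i$ at the start of round $k$; (iii) for each of the $p$ newly completed line positions $j_k{+}1,\dots,j_k{+}p$, the index of the query (among the $q$ queries of machine $i$ in this round) that produced it together with the corresponding line position, using $O(\log q + \log T)$ bits per position; and (iv) the remaining $v - p$ inputs $x_\ell$ explicitly. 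The decoder rebuilds $\mathsf{RO}$, reruns machine $i$ on $M_i^{(k)}$, watches its queries, and reads off each $x_\ell$ with $\ell = (j_k+t) \bmod v$ from the $t$-th identified query. For $p > h$ the total length drops below the information-theoretic bound $n 2^n + uv - 1$ of Claim~\ref{limit2} by more than $\log(1/\varepsilon)$, a contradiction.

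\medskip

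\emph{Step 3 (Combining, and the main obstacle).} Under the no-jump event, the frontier $j_k$ can advance by at most $\max_i p_i^{(k)}$ per round, so union-bounding Step~2 over the $mR$ machine-round pairs yields $R = \Omega(T/h) = \Omega(Tu/s)$ with constant probability; absorbing $u = \Theta(n)$ into the constant recovers the $\Omega(T/s)$ bound stated in the theorem. The main obstacle I anticipate is making the compression step watertight: the decoder must correctly associate each recovered line position with the right query of machine $i$, which hinges on the no-jump event so that the machine's queries really do touch line positions in the canonical order (and so the ``index within the round'' part of the encoding unambiguously picks out the correct pre-image). Because the entire analysis is carried out over uniform $(\mathsf{RO},x)$, it yields average-case hardness for free, and Remark~\ref{ran} upgrades the conclusion to randomized MPC.
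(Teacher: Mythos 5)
Your proposal is correct and follows essentially the same route as the paper's Appendix~\ref{warmup}: a no-jump lemma (Lemma~\ref{nojump}) forcing sequential progress along the line, a compression/encoding argument (Lemma~\ref{smallint}, via Claims~\ref{enc} and~\ref{limit}) bounding each machine's per-round progress by $h = s/(u-\log q-\log v)+1$, and an induction over rounds combining the two. The only cosmetic differences are that the paper records the input index in $[v]$ (rather than the line position in $[T]$) in the encoding and restricts attention to windows $C_j$ of at most $v$ consecutive entries to guarantee the recovered $x_\ell$'s are distinct; neither changes the substance of your argument.
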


Note that Theorem~\ref{simline} is information-theoretic in the sense that even if we allow each machine to do arbitrary computation in each round, our lower bound result still holds. 

The function we consider in Theorem~\ref{simline} is $\SimLine^\mathsf{RO}_{n,w,u,v}:\{0,1\}^{uv}\to\{0,1\}^n$ defined as follows: Given input $x=x_1,x_2,...,x_v$ such that $x_i\in\{0,1\}^u$ for all $i\in [v]$ and a random oracle $\mathsf{RO}:\{0,1\}^{n}\to\{0,1\}^{n}$, let $r_1=0^u$
and 
\begin{align*}
    (r_{i+1},z_{i+1}) \coloneqq \mathsf{RO}(x_{i\Mod v},r_i,0^*),\quad \forall i\in[w],
\end{align*}
the output of $\SimLine^\mathsf{RO}_{n,w, u, v}(x)$ is defined as the answer to the last query, $(r_{w+1},z_{w+1})$. 

Given the parameters $S,T$ in Theorem~\ref{simline}, we set the parameters of $\SimLine^\mathsf{RO}$ as $w=T$, $v=S/u$ and $u=n/3$. One can observe that the obvious RAM algorithm which queries $(x_i,r_i)$ one by one already meets the performance on memory size and running time stated in Theorem~\ref{simline}. Thus, we focus on the lower bound of massively parallel computation. As the standard observation in Remark~\ref{ran}, without loss of generality, we assume the MPC computation is deterministic. In particular, we can conclude Theorem~\ref{simline} from the following lemma.

\begin{lma}\label{lmasimline}
There exists a universal constant $c>1$ such that for any sufficiently large $n>0$, let $\mathsf{RO}:\{0,1\}^n\to\{0,1\}^n$ be a random oracle and for any $n\leq S<2^{O(n)}$ and $S\leq T<2^{O(n)}$, consider the oracle function $\SimLine^\mathsf{RO}_{n,w,u,v}:\{0,1\}^{uv}\to\{0,1\}^n$ where $w=T$, $v=S/u$ and $u=n/3$. Let $\mathcal{A}^\mathsf{RO}$ be a deterministic massively parallel computation with $m<2^{O(n)}$ machines, local memory of size $s\leq S/c$ and the number of local queries $q<2^{O(n)}$ to random oracle per round. Then, in random oracle model, $\mathcal{A}^\mathsf{RO}$ needs at least $R\geq \frac{w}{s/(u-\log q-\log v)+1}\geq \Omega(\frac{T}{s})$ rounds to compute $\SimLine^\mathsf{RO}_{n,w,u,v}$ in random oracle model.
\end{lma}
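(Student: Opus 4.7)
My plan is to adapt the compression-argument template used for Lemma~\ref{lmaline}, but in a substantially simpler form. The key simplification is that in $\SimLine$ the index of the input block consumed at line position $i$ is the deterministic value $i \bmod v$ rather than a random oracle output, so there is no need to enumerate over sequences of $\log^2 w$ consecutive random indices as in Definition~\ref{set} and Lemma~\ref{bound}; a single direct compression per (machine, round) will suffice. By Remark~\ref{ran} it is enough to handle deterministic MPC algorithms.

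First I will establish a ``no jumping'' lemma in the spirit of Lemma~\ref{nojump2}: with probability at least $1 - wRmq \cdot 2^{-u}$ over $(\mathsf{RO}, X)$, no machine ever successfully queries a line tuple $(x_{i \bmod v}, r_i, 0^*)$ before having previously queried its predecessor $(x_{(i-1) \bmod v}, r_{i-1}, 0^*)$. The argument is standard: conditioned on all earlier queries not fixing the predecessor entry, the value $r_i$ is uniformly distributed in $\{0,1\}^u$, so any single guessed tuple matches with probability at most $2^{-u}$, and a union bound over the $w$ line positions and the $Rmq$ query slots yields the claim, which under our parameter choice $u = n/3$ and $R,m,q,w \leq 2^{O(n)}$ is negligible.

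Second, I will prove a per-round compression lemma. Fix a round $k$ and a machine index $i$, and let $A_i^{(k)} \subseteq [w]$ be the set of line positions $j$ for which machine $i$ issues the correct query $(x_{j \bmod v}, r_j, 0^*)$ in round $k$. Set $h = s / (u - \log v - \log q) + 1$. I will show $\Pr[|A_i^{(k)}| > h] \leq 2^{-(u - \log v - \log q)}$ via the following encoding of bad pairs $(\mathsf{RO}, X)$: write $\mathsf{RO}$ verbatim ($n 2^n$ bits), write machine $i$'s input memory $M$ at the start of round $k$ ($s$ bits), and for $h+1$ distinct correct queries in $A_i^{(k)}$ record the within-round query index ($\log q$ bits) together with the input block index $\ell = j \bmod v$ ($\log v$ bits); finally list the remaining $v - (h+1)$ input blocks verbatim. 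The decoder reconstructs $\mathsf{RO}$, reads $M$, simulates machine $i$'s deterministic round-$k$ computation against $(\mathsf{RO}, M)$ to regenerate the query sequence, extracts $x_\ell$ from the indicated query slots, and fills in the rest. Comparing the resulting codeword length with Claim~\ref{limit2}'s information-theoretic lower bound on encoding uniform $(\mathsf{RO}, X)$ yields the probability bound; $h$ is chosen exactly so that the net compression savings $(h+1)(u - \log v - \log q) - s$ is at least $u - \log v - \log q$.

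Finally, I will combine the two lemmas: by induction on $k$, using a union bound over the at most $Rm$ (round, machine) pairs, conditioned on the no-jumping event and on $|A_i^{(k)}| \leq h$ for every $i$ and $k$, the farthest line position reached after round $k$ is at most $kh + 1$. Since computing $\SimLine^{\mathsf{RO}}_{n,w,u,v}$ requires reaching line position $w+1$, we need $R \geq w/h$, which under the parameter regime $u = n/3$ and $\log v, \log q \leq \alpha n$ for a small constant $\alpha$ controlled by the universal constant $c$ gives $R = \Omega(T/s)$. The step I expect to require the most care is the simulation inside the decoder of the compression scheme: the decoder must reproduce machine $i$'s round-$k$ query sequence in the same order as the encoder observed, which relies on determinism given $(M, \mathsf{RO})$ and, crucially, on the no-jumping event to ensure that the state $M$ at the start of round $k$ does not already encode oracle answers that would make the simulation of later rounds diverge from the real execution.
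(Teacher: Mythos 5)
Your proposal is correct and follows essentially the same route as the paper's proof: a lazy-sampling ``no jumping'' lemma (the paper's Lemma~\ref{nojump}), a per-round compression lemma bounding the number of input blocks a machine can reveal through its queries (Lemmas~\ref{smallint} and~\ref{mrsmallint}, with the same codeword accounting and the same choice of $h$), and an induction over rounds showing progress of at most $h$ line positions per round. The only detail to tighten is that the compression saves one $x$-block per \emph{distinct} index $j \bmod v$, not per element of $A_i^{(k)}$, which is why the paper restricts the target set $C_j$ to a window of at most $v$ consecutive positions containing no duplicate $x_i$; since $h<v$ under your parameter choice, this does not affect the final bound.
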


  We first give the intuition of our lower bound. Consider a local algorithm $\mathcal{A}$ at the beginning of certain round and the set of $x_i$'s stored in its local memory. Suppose the size of this set is $r$. Then, obviously, we can bound the number of correct queries of $\mathcal{A}$ by $r$. To formalize this, we need a definition that effectively captures the set of $x_i$ mentioned above. In particular, we define the set $B$ to be the set containing those $x$’s that appear in the queries of the algorithm. 
  
  Now we give a high-level overview of our technique. Our argument centers around an encoding scheme that encodes the random oracle $\mathsf{RO}$ and the input $X$. The main idea of this encoding scheme is to retrieve $x_i$'s from the queries of local machine. In particular, the encoding contains the local memory, random oracle, the $x_i$'s that is not retrievable from the queries, and some auxiliary information indicating where to retrieve $x_i$'s from the queries. To decode, first run the local algorithm on local memory with access to the stored oracle to obtain the set of queries, then, use the auxiliary information to retrieve $x_i$'s contained in the queries, and combine them with the remaining $x_i$'s. Since the size of local memory is small, the encoding scheme will go beyond the information-theoretic limit if the set of queries contains many $x_i$'s, which leads to a contradiction. If we use the local algorithm in this way, we can bound the size of intersection between the set of queries and the set of correct entries in $\SimLine^\mathsf{RO}$. This allows us to bound the number of steps a machine can advance in a round by the maximum number of $x_i$'s it can store in local memory. 

Let $h=\frac{s}{u-\log q-\log v}+1$. To simplify the notation, we assume $\frac{w}{h}$ is an integer. For each $0\leq j\leq w/h-1$, let 
\[
C_j=\left\{(x_{i\Mod v},r_i)\mid jh+1\leq i\leq \min(jh+v,w)\right\}
\]
be the set containing no duplicate $x_i$. 
\begin{lma}\label{smallint}
Given $0\leq j\leq w/v-1$, a subset $C\subseteq C_j$ and a pair of deterministic algorithms $(\mathcal{A}_1,\mathcal{A}_2)$ such that $\mathcal{A}_1$ with oracle access to $\mathsf{RO}$ is given $vu$-bit $X=x_0,x_1,...,x_{v-1}$ as input and outputs $s$-bit state $M$ and $\mathcal{A}_2$ has oracle access to $\mathsf{RO}$ and given $M$ as input, outputs a set of its queries $Q$ to the oracle $\mathsf{RO}$ and a set of corresponding answers $A$, where $\abs{Q}=\abs{A}=q$, we have, for any $\alpha>0$,
\ifnum\submission=0
\[
    \Pr_{(\mathsf{RO},X)}\left[\abs{Q\cap C}\geq \alpha: M\leftarrow \mathcal{A}_1^\mathsf{RO}(X), Q,A\leftarrow \mathcal{A}_2^\mathsf{RO}(M)\right]\leq 2^{-(\alpha(u-\log q-\log v)-s-1)}
\]
\else
\begin{align*}
    \Pr_{(\mathsf{RO},X)}&\left[\abs{Q\cap C}\geq \alpha: M\leftarrow \mathcal{A}_1^\mathsf{RO}(X), Q,A\leftarrow \mathcal{A}_2^\mathsf{RO}(M)\right] \\
    &\leq 2^{-(\alpha(u-\log q-\log v)-s-1)},
\end{align*}
where $\mathsf{RO}$ and $X$ are uniformly distributed.
\fi
\end{lma}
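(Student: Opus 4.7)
The plan is to establish Lemma~\ref{smallint} by a compression argument, following the same template that Claim~\ref{enc2} and Claim~\ref{limit2} use later in the paper for the harder $\Line^\mathsf{RO}$ case. The high-level idea is that if $Q \cap C$ is large with non-negligible probability, then the queries of $\mathcal{A}_2$ effectively leak many of the $x_i$'s from $X$, and recording the short ``locations'' of these leaks inside the query transcript costs fewer than $u$ bits per $x_i$. This produces an encoding of $(\mathsf{RO}, X)$ that beats the information-theoretic lower bound $\log|F| - 1$.

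First I would fix $j$ and $C \subseteq C_j$ and let $\epsilon$ denote the probability in question, so the set $F$ of ``bad'' pairs $(\mathsf{RO}, X)$ has cardinality at least $\epsilon \cdot 2^{n 2^n + uv}$. Next, I would define $\Enc(\mathsf{RO}, X)$ as follows: write down the entire truth table of $\mathsf{RO}$ ($n 2^n$ bits); run $\mathcal{A}_1^\mathsf{RO}(X)$ and write down its $s$-bit output $M$; then run $\mathcal{A}_2^\mathsf{RO}(M)$ to obtain $Q$, and identify $\alpha$ distinct queries $q_{t_1}, \ldots, q_{t_\alpha}$ lying in $C$ (recall each query from $C$ has the form $(x_{i \bmod v}, r_i, 0^*)$, so the first $u$ bits of the query is some $x_k$). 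For each such query write down the pair $(t_a, k_a)$ where $t_a \in [q]$ is its index in $\mathcal{A}_2$'s transcript and $k_a \in [v]$ is the index of $x$ being revealed; finally, append the remaining $v - \alpha$ values $x_k$ that were not recovered this way, in index order. The corresponding $\Dec$ reconstructs $\mathsf{RO}$ trivially, runs $\mathcal{A}_2^\mathsf{RO}(M)$ to obtain the same $Q$, reads each $(t_a, k_a)$ pair, extracts the first $u$ bits of the $t_a$-th query to recover $x_{k_a}$, and fills in the remaining coordinates from the final chunk of the encoding.

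Counting bits, the encoding has length at most
\begin{equation*}
n 2^n + s + \alpha(\log q + \log v) + (v-\alpha)u.
\end{equation*}
On the other hand, Claim~\ref{limit2} (already proved in the excerpt) forces
\begin{equation*}
n 2^n + s + \alpha(\log q + \log v) + (v-\alpha)u \;\geq\; \log |F| - 1 \;\geq\; n 2^n + uv + \log \epsilon - 1.
\end{equation*}
Canceling $n 2^n + uv$ and solving for $\epsilon$ gives exactly
\begin{equation*}
\epsilon \leq 2^{-(\alpha(u - \log q - \log v) - s - 1)},
\end{equation*}
as claimed.

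The main obstacle I anticipate is the bookkeeping in step~$\Enc$, specifically the subtlety that $\mathcal{A}_2$ may query the oracle adaptively so the transcript order is well-defined only once $M$ and $\mathsf{RO}$ are fixed; but since both encoder and decoder run $\mathcal{A}_2^\mathsf{RO}(M)$ on exactly the same inputs, the transcript is deterministic and the indices $t_a$ are unambiguous. A minor point is the additive ``$-1$'' in Claim~\ref{limit2}, which is why the final exponent carries a $-1$. Unlike the $\Line^\mathsf{RO}$ case, there is no need here to enumerate over sequences $a_1,\ldots,a_{\log^2 w}$ or to argue an independence/no-jump property, because in $\SimLine^\mathsf{RO}$ the index $i \bmod v$ is deterministic rather than oracle-derived; this is precisely why the warm-up avoids the subtler tweak used in Lemma~\ref{bound}.
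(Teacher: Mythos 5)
Your proposal is correct and follows essentially the same route as the paper: the paper's proof of Lemma~\ref{smallint} is exactly this compression argument (its Claim~\ref{enc} encodes the full oracle table, the $s$-bit state $M$, the query-index/input-index pairs for the $x_i$'s revealed by queries in $C$, and the remaining $x_i$'s verbatim, then invokes the same information-theoretic bound, stated there as Claim~\ref{limit}). Your bit count and the final manipulation yielding $\epsilon \leq 2^{-(\alpha(u-\log q-\log v)-s-1)}$ match the paper's.
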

\begin{proof}

\begin{cla}\label{enc}
If
\ifnum\submission=0
\[
    \Pr_{(\mathsf{RO},X)}\left[\abs{Q\cap C}\geq \alpha: M\leftarrow \mathcal{A}_1^\mathsf{RO}(X), Q,A\leftarrow \mathcal{A}_2^\mathsf{RO}(M)\right]=\epsilon,
\]
\else
{\small
\[
    \Pr_{(\mathsf{RO},X)}\left[\abs{Q\cap C}\geq \alpha: M\leftarrow \mathcal{A}_1^\mathsf{RO}(X), Q,A\leftarrow \mathcal{A}_2^\mathsf{RO}(M)\right]=\epsilon,
\]
}
\fi
then there is a set $F\subseteq\{(\mathsf{RO},X)\mid \mathsf{RO}:\{0,1\}^n\to\{0,1\}^n, X\in\{0,1\}^{uv}\}$ such that $\abs{F}\geq \epsilon 2^{n2^n+uv}$ and a deterministic encoding scheme $(\Enc,\Dec)$ with black-box access to $\mathcal{A}_1$ and $\mathcal{A}_2$ such that for any $(\mathsf{RO},X)\in F$, both of the following hold
\begin{enumerate}
    \item $\Dec(\Enc(\mathsf{RO},X))=(\mathsf{RO},X)$
    \item $\abs{\Enc(\mathsf{RO},X)}\leq s+\alpha(\log q+\log v)+(v-\alpha)u+2^nn$
\end{enumerate}

\end{cla}
\begin{proof}
Since 
\ifnum\submission=0
\[
    \Pr_{(\mathsf{RO},X)}\left[\abs{Q\cap C}\geq \alpha: M\leftarrow \mathcal{A}_1^\mathsf{RO}(X), Q,A\leftarrow \mathcal{A}_2^\mathsf{RO}(M)\right]=\epsilon
\]
\else
{\small
\[
    \Pr_{(\mathsf{RO},X)}\left[\abs{Q\cap C}\geq \alpha: M\leftarrow \mathcal{A}_1^\mathsf{RO}(X), Q,A\leftarrow \mathcal{A}_2^\mathsf{RO}(M)\right]=\epsilon
\]
}
\fi
and $\mathcal{A}_1$, $\mathcal{A}_2$ are deterministic, there is a set 
\[
    F\subseteq \{(\mathsf{RO},X) \mid \mathsf{RO}:\{0,1\}^n\to \{0,1\}^n,X\in\{0,1\}^{uv}\}
\]
such that 
\[
    \abs{F}\geq \epsilon 2^{n2^n+uv}
\]
and 
\ifnum\submission=0
\[
    \Pr\left[\abs{Q\cap C}\geq \alpha: M\leftarrow \mathcal{A}_1^\mathsf{RO}(X), Q,A\leftarrow \mathcal{A}_2^\mathsf{RO}(M)\right]=1, \forall (\mathsf{RO},X)\in F.
\]
\else
\[
    \Pr\left[\abs{Q\cap C}\geq \alpha: M\leftarrow \mathcal{A}_1^\mathsf{RO}(X), Q,A\leftarrow \mathcal{A}_2^\mathsf{RO}(M)\right]=1
\]
, $\forall (\mathsf{RO},X)\in F$.
\fi

We describe our encoding scheme that encodes all $(\mathsf{RO},X)\in F$.

$\Enc(\mathsf{RO},X):$
\begin{enumerate}
    \item Add entire oracle $\mathsf{RO}$ to our encoding.
    \item $M\leftarrow \mathcal{A}_1^\mathsf{RO}(X)$, add $M$ to our encoding.
    \item Run $Q,A\leftarrow \mathcal{A}_2^\mathsf{RO}(M)$.
    \item For each $c_i\in C$, if $c_i=(x,r)\in Q$, then record index of this query, $p_i$, and its index in $X$, $I_i$. Let $P=\{(p_i,I_i)\mid c_i\in C\}$ and add $P$ to our encoding. Note that $p_i$ takes $\log q$ bits, $I_i$ takes $\log v$ bits and $|P|\geq \alpha$.
    \item For each $x\in X$ but $x\notin C$, add $x$ to our encoding (in the order of $\SimLine^\mathsf{RO}$). Denote it as $X'$.
\end{enumerate}
As long as $u\geq \log q+\log v$, the encoding takes size at most
\[
    s+\alpha(\log q+\log v)+(v-\alpha)u+2^nn.
\]
$\Dec(\mathsf{RO},M,P,X'):$
\begin{enumerate}
    \item Run $\mathcal{A}_2^\mathsf{RO}(M)$. 
    \item Use the recorded position in $P$ to recover those recorded $x$.
    \item Use $X'$ to recover the remaining $x\in X$
\end{enumerate}

Since we answer the queries of $\mathcal{A}_2$ using the same oracle and $\mathcal{A}_2$ is deterministic, the queries of $\mathcal{A}_2$ when decoding are the same as the ones when encoding. Hence, we can correctly construct some of $x$ from the positions recorded in $P$. This completes the proof.
\end{proof}

\begin{cla}\label{limit}
For any deterministic encoding scheme $(\Enc,\Dec)$ such that $\Dec(\Enc(m))=m$, $\forall m\in M$, we have
\[
     \max_{m} \abs{\Enc(m)}\geq \log \abs{M}-1.
\]
\end{cla}
\begin{proof}
Suppose $\max_{m}\abs{\Enc(m)}=t$. Then the number of possible codewords is 
\[
    \sum_{i=0}^t 2^{t-i}\leq 2^{t+1}.
\]
To have a one-to-one mapping, the following must be true.
\[
    2^{t+1}\geq \abs{M}
\]
And hence $t\geq \log \abs{M} -1$.
\end{proof}
Now we are able to prove Lemma~\ref{smallint}.\\
Suppose 
\[\begin{split}
    & \Pr_{(\mathsf{RO},X)}\left[\abs{Q\cap C}\geq \alpha:M\leftarrow \mathcal{A}_1^\mathsf{RO}(X),Q,A\leftarrow \mathcal{A}_2^\mathsf{RO}(M)\right] \\ & =\epsilon.
    \end{split}
\]
Then by Claim~\ref{enc}, there is a set 
\[
    F\subseteq \{(\mathsf{RO},X)\mid \mathsf{RO}:\{0,1\}^n\to \{0,1\}^n, X\in\{0,1\}^{uv}\}
\]
such that  $\abs{F}\geq \epsilon 2^{n2^n+uv}$ and a deterministic encoding scheme $(\Enc,\Dec)$ with blackbox access to $\mathcal{A}_1$ and $\mathcal{A}_2$ such that 
                                              $\abs{\Enc(\mathsf{RO},X)}\leq s+\alpha(\log q+\log v)+(v-\alpha)u+2^nn$ for any $(\mathsf{RO},X)\in F$. 
On the other hand, by Claim~\ref{limit}, for any deterministic encoding scheme $(\Enc', \Dec')$ such that 
\[
    \Dec'(\Enc'(m))=m, \forall m\in F,
\]
we have
\[
    \max_m \abs{\Enc'(m)}\geq \log \abs{F}-1\geq n2^n+uv+\log \epsilon-1.
\]
Combining these, the lemma follows.
\end{proof}
 Let $Q_i^{(k)}$ be the set of queries done by machine $i$ in round $k$. We can apply Lemma~\ref{smallint} to show the following lemma in massively parallel computation model.
\begin{lma}\label{mrsmallint}
For any deterministic massively parallel computation $\mathcal{A}$ with $m$ machines, local memory of size $s$ and the number of queries $q$ computing $\SimLine_{n,w,u,v}$, for any machine $i$, any round $k\geq 0$, and any subset $C\subseteq C_j$ where $w/h-1\geq j\geq 0$, we have, for any $\alpha>0$,
\[
    \Pr_{(\mathsf{RO},X)}\left[\abs{Q_i^{(k)}\cap C}\geq \alpha\right]\leq 2^{-(\alpha(u-\log q-\log v)-s-1)},
\]
where $\mathsf{RO}$ and $X$ are uniformly distributed.
\end{lma}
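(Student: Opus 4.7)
The plan is to reduce Lemma~\ref{mrsmallint} directly to Lemma~\ref{smallint} by repackaging the multi-round, multi-machine MPC computation into the two-algorithm framework used there. Fix the machine index $i$ and round $k$. I would define $\mathcal{A}_1^\mathsf{RO}$ to be the deterministic oracle algorithm that, on input $X\in\{0,1\}^{uv}$, internally simulates the entire MPC computation $\mathcal{A}^\mathsf{RO}$ on $X$ through the end of round $k-1$ (or, if $k=0$, simply reads off the initial partition of $X$ assigned to machine $i$), and outputs the local memory $M_i^k$ of machine $i$ at the beginning of round $k$. By the MPC definition, $|M_i^k|\le s$, which is exactly the $s$-bit state-size requirement of Lemma~\ref{smallint}. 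Dually, let $\mathcal{A}_2^\mathsf{RO}$ be machine $i$'s round-$k$ local computation: on an $s$-bit state $M$ it outputs the set of queries (padded to size $q$ if fewer are made) it issues in round $k$ together with their oracle answers.

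With this packaging, the joint distribution of $Q_i^{(k)}$ obtained by running $\mathcal{A}^\mathsf{RO}$ on uniformly random $(\mathsf{RO},X)$ coincides with the distribution of the query set $Q$ produced by the pipeline $M\leftarrow\mathcal{A}_1^\mathsf{RO}(X)$, $Q,A\leftarrow\mathcal{A}_2^\mathsf{RO}(M)$. Invoking Lemma~\ref{smallint} on this pair $(\mathcal{A}_1,\mathcal{A}_2)$ with the given subset $C\subseteq C_j$ yields
\[
    \Pr_{(\mathsf{RO},X)}\!\left[|Q_i^{(k)}\cap C|\geq \alpha\right]\leq 2^{-(\alpha(u-\log q-\log v)-s-1)},
\]
which is precisely the claim of Lemma~\ref{mrsmallint}.

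The main obstacle is conceptual rather than computational: one must verify the packaging is faithful. Three points deserve care. First, by Remark~\ref{ran}, without loss of generality $\mathcal{A}$ is deterministic, so $\mathcal{A}_1$ really is a well-defined deterministic oracle algorithm. Second, Lemma~\ref{smallint} places no bound whatsoever on the number of oracle queries or the running time of $\mathcal{A}_1$, which is crucial since $\mathcal{A}_1$ simulates all $m$ machines across $k$ rounds and may issue vastly more than $q$ queries in the process. Third, the only constraint that Lemma~\ref{smallint} imposes on $\mathcal{A}_1$ is the size of its output state, and this is exactly the local memory bound $s$ of the targeted machine, which is what ultimately drives the compression argument through the composition. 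Granting these three points, the lemma is an immediate corollary of Lemma~\ref{smallint}.
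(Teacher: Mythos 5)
Your proposal is correct and matches the paper's proof of Lemma~\ref{mrsmallint}: the paper likewise instantiates $\mathcal{A}_1$ as the simulator of the entire MPC computation up to the beginning of round $k$ outputting $M_i^{(k)}$, and $\mathcal{A}_2$ as machine $i$'s round-$k$ computation, then applies Lemma~\ref{smallint}. Your additional remarks on why the packaging is faithful are sound and only make the (terse) original argument more explicit.
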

\begin{proof}
Given a subset $C$, a machine index $i$ and a particular round $k$, consider the massively parallel computation running until the beginning of round $k$ and the memory state $M_i^{(k)}$ given to machine $i$ at the beginning of round $k$. We can use the algorithm $\mathcal{A}_1$ in Lemma~\ref{smallint} to simulate the computation done by $\mathcal{A}$ until the beginning of round $k$ and hence, let $M=M_i^{(k)}$ and $\mathcal{A}_2$ be the computation done by machine $i$ in round $k$. The lemma follows by applying Lemma~\ref{smallint}.
\end{proof}
Lemma~\ref{mrsmallint} only bounds the number of intersection. It is still possible that the algorithm somehow obtain the last answer in only one round. The following lemma helps us rule out this possibility, which says that any algorithm can only query the $j+1$-th entry in the $\SimLine$ with small probability if it has not queried the $j$-th entry. We state it as follows. 

\begin{lma}\label{nojump}
For any deterministic massively parallel computation $\mathcal{A}$, any index $0\leq j\leq w-1$ and any index of query $k$, let $E_{j,k}$ be the event that $\mathcal{A}$ successfully queries $(x_{j+1}, r_{j+1})$ on its $k$-th query, given that all the previous queries $q_i\neq (x_j,r_j)$. Then we have
\[
    \Pr_{(\mathsf{RO},X)}\left[E_{j,k}\right]\leq 2^{-u},
\]
where $\mathsf{RO}$ and $X$ are uniformly distributed.
\end{lma}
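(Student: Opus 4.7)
}
The plan is to use a standard lazy-sampling (deferred-decisions) argument for the random oracle. I would fix, in the probability space, everything that is visible to $\mathcal{A}$ at the moment it is about to issue its $k$-th query, and then argue that what it still needs to guess is uniformly distributed and independent of that view.

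Concretely, I first fix the input $X \in \{0,1\}^{uv}$ and the transcript $(q_1,a_1),\dots,(q_{k-1},a_{k-1})$ of the first $k-1$ queries of $\mathcal{A}$ to $\mathsf{RO}$, together with the conditioning event that none of $q_1,\dots,q_{k-1}$ equals $(x_{j \,\mathrm{mod}\, v}, r_j, 0^*)$. Because $\mathcal{A}$ is deterministic (by Remark~\ref{ran}) and its state at step $k$ is a function only of $X$ and of $(q_1,a_1),\dots,(q_{k-1},a_{k-1})$, the identity of the $k$-th query is then completely determined; call it $q_k^\star$. Note also that once $X$ is fixed, the value $r_j$ (and the content of the would-be entry $(x_{j\,\mathrm{mod}\, v}, r_j, 0^*)$) is a function of the oracle values at earlier points on the line, none of which is the entry $(x_{j\,\mathrm{mod}\, v}, r_j, 0^*)$ itself; so the conditioning is well-defined.

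Now I examine the residual randomness of $\mathsf{RO}$ on the input $(x_{j\,\mathrm{mod}\, v}, r_j, 0^*)$. Since this input was not among $q_1,\dots,q_{k-1}$, the oracle's value at this point is still uniform on $\{0,1\}^n$, independently of the conditioned transcript; hence the $u$-bit prefix $r_{j+1}$ of $\mathsf{RO}(x_{j\,\mathrm{mod}\, v}, r_j, 0^*)$ is uniform on $\{0,1\}^u$ and independent of the fixed query $q_k^\star$. The event $E_{j,k}$ requires $q_k^\star = (x_{(j+1)\,\mathrm{mod}\, v}, r_{j+1}, 0^*)$, and in particular that the ``$r$-coordinate'' of $q_k^\star$ coincides with this uniform $r_{j+1}$. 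That happens with probability at most $2^{-u}$. Averaging over the choice of $(X, q_1, a_1, \dots, q_{k-1}, a_{k-1})$ consistent with the conditioning event yields $\Pr[E_{j,k}] \le 2^{-u}$.

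The only real subtlety — and thus the main thing to be careful about — is the conditioning: one must verify that conditioning on ``no earlier query was $(x_{j\,\mathrm{mod}\, v}, r_j, 0^*)$'' does not leak information about $\mathsf{RO}(x_{j\,\mathrm{mod}\, v}, r_j, 0^*)$. This follows because the event that a particular query equals a given string depends only on $X$ and on the oracle values at the inputs of earlier queries (all of which are, by assumption, different from $(x_{j\,\mathrm{mod}\, v}, r_j, 0^*)$), and so is independent of $\mathsf{RO}(x_{j\,\mathrm{mod}\, v}, r_j, 0^*)$. No union bound is needed here, since the statement is about a single fixed query index $k$; this is analogous to, and in fact simpler than, the argument already used in the proof of Lemma~\ref{nojump2}.
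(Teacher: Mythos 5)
Your proposal is correct and follows essentially the same route as the paper's proof: both fix the transcript of the first $k-1$ queries together with the oracle values determining $r_1,\dots,r_j$ (the paper's ``pre-fixed oracle answers''), observe that determinism pins down the $k$-th query, and then use lazy sampling to argue that $r_{j+1}$ remains uniform over $\{0,1\}^u$, giving the $2^{-u}$ bound. Your explicit remark that the conditioning event does not leak information about $\mathsf{RO}(x_{j\,\mathrm{mod}\,v},r_j,0^*)$ is just a more careful phrasing of the same step.
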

\begin{proof}
Suppose all the previous queries are $q_1,...,q_{k-1}$ and denote the next query $q_k$. Suppose further $\mathsf{RO}(x_{i-1},r_{i-1})=r_i$ for $1\leq i\leq j$.
We consider the set of random oracles, $\mathsf{RO}'$, consistent with the answers to the queries $q_1,q_2,...,q_{k-1}$ and the pre-fixed oracle answers $r_1,...,r_j$. For these oracles, the oracle input $(x_{j},r_{j})$ is well-defined and hence, further consider the answer to this input is lazily assigned. Since the answers to the queries are fixed and $\mathcal{A}$ only depends on the answers to its queries, the next query $q_k$ will be the same for any oracle in $\mathsf{RO}'$. Moreover, $r_{j+1}$ is still uniform over all $2^u$ possible values. Hence, we conclude that the guessing probability will be less than $2^{-u}$.
\end{proof}
Now we are able to prove Lemma~\ref{lmasimline}.
\begin{proof}[Proof of Lemma~\ref{lmasimline}]
Let $h=\frac{s}{u-\log q-\log v}+1$. Recall that for each $0\leq j\leq w/h-1$, $C_j=\{(x_{i\Mod v},r_i)\mid jh+1\leq i\leq \min(jh+v,w)\}$.
For each round $k$, let $C^{(k)}=\{(x_{i\Mod v}, r_i)\mid kh+1<i\leq w\}$. For each round $k$, let $Q^{(\leq k)}$ be the set of queries done by all machines until the end of round $k$, $Q^{(k)}$ be the set of queries done by all machines in round $k$, and recall that $Q^{(k)}_i$ is the set of queries done by machine $i$ in round $k$. We show the following claim.
\begin{cla}
For any deterministic massively parallel computation with $m$ machines, local memory of size $s$ and the number of queries $q$ computing $\SimLine_{n,w,u,v}^\mathsf{RO}$ and running until the end of round $k< \frac{w}{h}-1$, 
\[\begin{split}
    & \Pr_{(\mathsf{RO},X)}[\abs{Q^{(\leq k)}\cap C^{(k+1)}}>0] \leq  (k+1)(m2^{-(u-\log q-\log v)}+wmq 2^{-u}).
    \end{split}
\]
\end{cla}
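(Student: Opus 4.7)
The plan is to prove the claim by induction on the round index $k$. The intuition is that for $\abs{Q^{(\le k)} \cap C^{(k+1)}} > 0$ to occur, either some round strictly before $k$ already queried into $C^{(k)}$ (captured by the inductive hypothesis), or round $k$ is the first round to advance past $C^{(k)}$, in which case a single machine must compute $h$ fresh consecutive entries of $\SimLine$ in one round, which Lemma~\ref{mrsmallint} shows is unlikely. Concretely, I would write
\[
\Pr\!\bigl[\abs{Q^{(\le k)} \cap C^{(k+1)}} > 0\bigr] \le \Pr\!\bigl[\abs{Q^{(\le k-1)} \cap C^{(k)}} > 0\bigr] + \Pr\!\bigl[\abs{Q^{(k)} \cap C^{(k+1)}} > 0 \wedge \abs{Q^{(\le k-1)} \cap C^{(k)}} = 0\bigr],
\]
apply the inductive hypothesis to the first summand, and show the second summand is at most $m \cdot 2^{-(u-\log q - \log v)} + wmq \cdot 2^{-u}$, so that the claimed bound grows by exactly the required amount when $k$ increases by $1$. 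The base case $k=0$ is handled by the same single-round estimate, with the first summand vacuous.

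To bound the second summand I would split into ``jumping'' and ``non-jumping'' contributions. By Lemma~\ref{nojump}, the probability that any round-$k$ query successfully guesses some $(x_{j+1}, r_{j+1})$ without $(x_j, r_j)$ having been queried first is at most $2^{-u}$ per position per query, and a union bound over the $\le w$ Line positions and the $\le mq$ round-$k$ queries contributes $wmq \cdot 2^{-u}$. Conditioning on no jump and on $\abs{Q^{(\le k-1)} \cap C^{(k)}} = 0$, no oracle answer $r_j$ for $j \ge kh+3$ was ever computed before round $k$, so for any single machine $i$ to issue a correct round-$k$ query at a position $\ge (k+1)h+2$, machine $i$ must itself make correct queries for every intermediate position $kh+2, \ldots, (k+1)h+1$ within round $k$, forcing $\abs{Q^{(k)}_i \cap C_k} \ge h$. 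Invoking Lemma~\ref{mrsmallint} with $\alpha = h$ then gives a per-machine bound of $2^{-(h(u-\log q - \log v)-s-1)}$, which collapses to $O\!\bigl(2^{-(u-\log q - \log v)}\bigr)$ upon substituting $h = \frac{s}{u - \log q - \log v} + 1$; a union bound over the $m$ machines contributes the $m \cdot 2^{-(u-\log q - \log v)}$ term.

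I expect the main obstacle to be justifying cleanly the key implication ``no jump $\wedge$ no prior query in $C^{(k)}$ $\Rightarrow$ $\abs{Q^{(k)}_i \cap C_k} \ge h$'' for the machine $i$ that ultimately queries into $C^{(k+1)}$, since in principle other machines could have assisted in advancing $\SimLine$ in previous rounds. This step uses the standard lazy-evaluation view of the random oracle: the value $r_j$ for $j \ge kh+3$ can be obtained only via a correct previous query at position $j-1$, a jump ruled out by Lemma~\ref{nojump}, or the fixed initial value $r_1 = 0^u$, so nothing machine $i$ receives in its incoming state at the start of round $k$ can substitute for the queries $kh+2, \ldots, (k+1)h+1$. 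A secondary subtlety is that Lemma~\ref{mrsmallint} is an unconditional statement, so when invoking it inside the conditional event I simply drop the global conditioning and union-bound the two contributions independently; this is lossy by only a constant factor, comfortably absorbed in the stated bound.
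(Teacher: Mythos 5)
Your proposal is correct and follows essentially the same route as the paper: the same inductive decomposition into $\Pr[\abs{Q^{(\le k-1)}\cap C^{(k)}}>0]$ plus a single-round term, with Lemma~\ref{nojump} handling the jumping queries via a union bound over the $w$ positions and $mq$ queries, and Lemma~\ref{mrsmallint} with $\alpha=h$ handling the non-jumping case where a single machine must traverse $h$ consecutive entries of $C_k$. The factor-of-two slack you flag when substituting $h=\frac{s}{u-\log q-\log v}+1$ into the exponent is present (and silently dropped) in the paper's own computation as well, so it is not a gap particular to your argument.
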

\begin{proof}
Let $E_i^{(k)}$ be the event that, in round $k$, there exist $t\in [q]$ and $j\in [w]$ such that machine $i$ successfully queries $(x_{j+1}, r_{j+1})$ on its $t$-th query given that all the previous queries $q_a\neq (x_j,r_j)$. Then, by Lemma~\ref{nojump} and a union bound, for every $i$ and every $k$,
\[
    \Pr_{(\mathsf{RO},X)}[E_i^{(k)}]\leq wq 2^{-u}.
\]
We prove the claim by induction. For the base case, $k=0$, we know that by Lemma~\ref{smallint} and setting $\alpha=h$, for any machine $i$,
\[
    \Pr_{(\mathsf{RO},X)}[\abs{Q_i^{(0)}\cap C_0}\geq h]\leq 2^{-(u-\log q-\log v)}.
\]
Thus, 
{\small
\begin{align*}
    \Pr_{(\mathsf{RO},X)}[\abs{Q^{(0)}_i\cap C^{(1)}}>0] \leq & \Pr[\abs{Q_i^{(0)}\cap C^{(1)}}>0 \land \overline{E_i^{(0)}}]+\Pr[E_i^{(0)}]\\
    \leq  & \Pr[\abs{Q^{(0)}_i\cap T_0}\geq h \land \overline{E_i^{(0)}}]+\Pr[E_i^{(0)}]\\
    \leq & 2^{-(u-\log q-\log v)}+wq 2^{-u}.
\end{align*}
}
And hence, by a union bound, we have
\[\begin{split}
    & \Pr_{(\mathsf{RO},X)}[\abs{Q^{(\leq 0)}\cap C^{(1)}}> 0]  \leq m2^{-(u-\log q-\log v)}+wmq 2^{-u}.
    \end{split}
\]
Now assume that in round $k-1$, the claim holds. Similarly, by Lemma~\ref{simline} and setting $\alpha=h$, for any machine $i$,
\[
    \Pr_{(\mathsf{RO},X)}[\abs{Q_i^{(k)}\cap C_k}\geq h]\leq 2^{-(u-\log q-\log v)}.
\]
 Thus,
 \ifnum\submission=0
 {\small
\begin{align*}
    \Pr_{(\mathsf{RO},X)}[\abs{Q^{(\leq k)}\cap C^{(k+1)}}>0]
    &\leq \Pr[\abs{Q^{(\leq k-1)}\cap C^{(k)}}>0]+\Pr[\abs{Q^{(\leq k)}\cap C^{(k+1)}}>0\land \abs{Q^{(\leq k-1)}\cap C^{(k)}}=0]\\
    &= \Pr[\abs{Q^{(\leq k-1)}\cap C^{(k)}}>0]+\Pr[\abs{Q^{(k)}\cap C^{(k+1)}}>0]\\
    &\leq \Pr[\abs{Q^{(\leq k-1)}\cap C^{(k)}}>0]+m\Pr[\abs{Q_i^{(k)}\cap C^{(k+1)}}>0]\\
    &\leq \Pr[\abs{Q^{(\leq k-1)}\cap C^{(k)}}>0]+m\Pr[\abs{Q_i^{(k)}\cap C_k}\geq h]+m\Pr[E_i^{(k)}]\\
    &\leq k(m2^{-(u-\log q-\log v)}+wmq 2^{-u})+m2^{-(u-\log q-\log v)}+wmq 2^{-u}\\
    &=(k+1)(m2^{-(u-\log q-\log v)}+wmq 2^{-u}).
\end{align*}
}
\else
{\small
\begin{align*}
    &\Pr_{(\mathsf{RO},X)}[\abs{Q^{(\leq k)}\cap C^{(k+1)}}>0]\\
    \leq & \Pr[\abs{Q^{(\leq k-1)}\cap C^{(k)}}>0]\\
    &+\Pr[\abs{Q^{(\leq k)}\cap C^{(k+1)}}>0\land \abs{Q^{(\leq k-1)}\cap C^{(k)}}=0]\\
    = & \Pr[\abs{Q^{(\leq k-1)}\cap C^{(k)}}>0]+\Pr[\abs{Q^{(k)}\cap C^{(k+1)}}>0]\\
    \leq & \Pr[\abs{Q^{(\leq k-1)}\cap C^{(k)}}>0]+m\Pr[\abs{Q_i^{(k)}\cap C^{(k+1)}}>0]\\
    \leq & \Pr[\abs{Q^{(\leq k-1)}\cap C^{(k)}}>0]+m\Pr[\abs{Q_i^{(k)}\cap T_k}\geq h]\\
    &+m\Pr[E_i^{(k)}]\\
    \leq & k(m2^{-(u-\log q-\log v)}+wmq 2^{-u})+m2^{-(u-\log q-\log v)}\\
    &+wmq 2^{-u}\\
    = &(k+1)(m2^{-(u-\log q-\log v)}+wmq 2^{-u}).
\end{align*}
}
\fi
\end{proof}
    Let $\textbf{Success}$ be the event that $\alpha$ successfully compute $\SimLine_{n,w,u,v}^\mathsf{RO}$. To compute $\SimLine_{n,w,u,v}^\mathsf{RO}$, the algorithm must reach $(x_w,r_w)$. However, $(x_w,r_w)\in C^{(\frac{w}{h}-1)}$ and hence, by our claim,
\begin{align*}
    \Pr_{(\mathsf{RO},X)}[\textbf{Success}]
    &\leq \Pr_{(\mathsf{RO},X)}[|Q^{(\leq \frac{w}{h}-2)}\cap C^{(\frac{w}{h}-1)}|>0]\\
    &\leq \frac{w}{h}(m2^{-(u-\log q-\log v)}+wmq 2^{-u})\\
    &\leq 2^{-\Omega(u-\log q-\log v-\log m-\log w)}.
\end{align*}
As $n$ becomes sufficiently large, the success probability becomes sufficiently small.
\end{proof}

\ifnum\submission=1
\end{multicols}
\fi
\end{document}